      \theoremstyle{plain}
      \newtheorem{assumption}{Assumption}
\newtheorem{theorem}{Theorem}
\newtheorem{definition}{Definition}
\newtheorem{remark}{Remark}
\begin{document}

\title{Extended Neighboring Extremal Optimal Control with State and Preview Perturbations}

\author{Amin~Vahidi-Moghaddam, Kaixiang~Zhang, Zhaojian~Li$^*$, Xunyuan Yin, Ziyou Song, and Yan~Wang
\thanks{$^*$Zhaojian Li is the corresponding author.}
\thanks{Amin Vahidi-Moghaddam, Kaixiang~Zhang, and Zhaojian Li are with the Department of Mechanical Engineering, Michigan State University, East Lansing, MI 48824 USA (e-mail: vahidimo@msu.edu,  zhangk64@msu.edu, lizhaoj1@egr.msu.edu).}
\thanks{Xunyuan Yin is with the School of Chemistry, Chemical Engineering and Biotechnology, Nanyang Technological University, 637459 Singapore (e-mail: xunyuan.yin@ntu.edu.sg).}
\thanks{Ziyou Song is with the Department of Mechanical Engineering,
National University of Singapore, 117575 Singapore (email:
ziyou@nus.edu.sg).}
\thanks{Yan Wang is with the Research and Advanced Engineering, Ford Motor Company, Dearborn, MI 48121 USA (e-mail: ywang21@ford.com).}
} 

\maketitle

\begin{abstract}
Optimal control schemes have achieved remarkable performance in numerous engineering applications. However, they typically require high computational cost, which has limited their use in real-world engineering systems with fast dynamics and/or limited computation power. To address this challenge, Neighboring Extremal (NE) has been developed as an efficient optimal adaption strategy to adapt a pre-computed nominal control solution to perturbations from the nominal trajectory. The resulting control law is a time-varying feedback gain that can be pre-computed along with the original optimal control problem, and it takes negligible online computation. However, existing NE frameworks only deal with state perturbations while in modern applications, optimal controllers (e.g., predictive controllers) frequently incorporate preview information. Therefore, a new NE framework is needed to adapt to such preview perturbations. In this work, an extended NE (ENE) framework is developed to systematically adapt the nominal control to both state and preview perturbations. We show that the derived ENE law is two time-varying feedback gains on the state perturbation and the preview perturbation. We also develop schemes to handle nominal non-optimal solutions and large perturbations to retain optimal performance and constraint satisfaction. Case study on nonlinear model predictive control is presented due to its popularity but it can be easily extended to other optimal control schemes. Promising simulation results on the cart inverted pendulum problem demonstrate the efficacy of the ENE algorithm.
\end{abstract}

\textbf{Note to Practitioners}. Due to the vast success in predictive control and advancement in sensing, modern control applications have frequently been incorporating preview information in the control design. For example, the road profile preview obtained from vehicle crowdsourcing is exploited for simultaneous suspension control and energy harvesting, demonstrating a significant performance enhancement using the preview information despite noises in the preview \cite{Hajida1}. Another example is thermal management for cabin and battery of hybrid electric vehicles, where traffic preview is employed in hierarchical model predictive control to improve energy efficiency \cite{amini2019cabin}. In \cite{laks2011model}, light detection and ranging systems are used to provide wind disturbance preview to enhance the controls of turbine blades. In \cite{yazdandoost2022optimization}, virtual water preview is employed using integrated water resources management modelling to optimize agricultural patterns and control level of water in lakes. In this work, we develop an extended neighboring extremal framework that can adapt a nominal control law to state and preview perturbations simultaneously. This setup is widely applicable as in many applications, a nominal preview is available while the preview signal can also be measured or estimated online. 


\IEEEpeerreviewmaketitle

\section{Introduction}
\label{Sec1}
Optimal control approaches, such as model predictive control (MPC), can explicitly handle system constraints while achieving optimal closed-loop performance \cite{bratta2022governor,ameli2022hierarchical,bieker2019deep}. However, such controllers typically involve solving optimization problems at each step and are thus computationally expensive, especially for nonlinear systems with non-convex constraints. This has hindered their wider adoption in applications with fast dynamics and/or limited computation resources \cite{ghaemi2006computationally}. The main motivation of this work is to address the high computational cost of the optimal control approaches such that it tackles the limitations of existing frameworks for modern applications which incorporate preview information.

As such, several frameworks have been developed to improve the computational efficiency of the optimal controllers. One approach is to simplify the system dynamics with model-reduction techniques \cite{lore2021model,amiri2022fly,zhang2022dimension}. However, these techniques require a trade-off between system performance and computational complexity, and it is often still computationally expensive after the model reduction. Another sound approach is to use function approximators, where functions such as neural networks \cite{bao2022learning,li2019adaptive,krishnamoorthy2021adaptive}, Gaussian process regression \cite{arcari2020meta,foumani2023multi,arcari2023bayesian}, and spatial temporal filters \cite{vahidi2022data} are exploited to learn the control policy, after which the learned policy is employed online to achieve efficient onboard computations. However, extensive data collection is required to ensure a comprehensive coverage of operating conditions. Furthermore, the learned policy lacks interpretability, and it is generally challenging to retain guaranteed system constraint satisfaction.

Neighboring extremal (NE) \cite{bagherzadeh2023neighboring,ghaemi2019optimal,gupta2017combined,bloch2016neighboring} is another promising paradigm to attain efficient computations by proposing a time-varying feedback gain on the state deviations. Specifically, given a pre-computed nominal solution based on a nominal initial state, the NE yields an optimal adaptation law (to the first order) that adapts the control to deviations from the nominal state, incurring negligible online computation while achieving (sub-)optimal performance when perturbations occur. The nominal solution can be computed offline and stored online, can be performed on a remote powerful controller (e.g., cloud), or computed ahead of time by utilizing the idling time of the processor. The NE framework has been employed in several engineering systems, including ship maneuvering control \cite{ghaemi2010path},  power management \cite{park2015real}, full bridge DC/DC converter \cite{xie2011model}, and spacecraft relative motion maneuvers \cite{park2013model}. In \cite{ghaemi2006computationally}, disturbance perturbations have been considered for the NE in the nonlinear optimal control problems; however, the formulation derived is limited to a constant disturbance. Moreover, using a parameter estimation for the systems with unknown parameters, parameter perturbations are considered for the NE, where the estimated parameters are considered constant during the predictions of the optimal control problem \cite{park2014tutorial}.

In this work, we aim to develop an extended NE (ENE) framework for the nonlinear optimal control problems to efficiently adapt a pre-computed nominal solution to both state and preview perturbations. The contributions of this work are: First, we formulate the ENE framework so that it surpasses the existing NE frameworks to explicitly handle both state and preview perturbations. This is a generalization of existing work \cite{ghaemi2008neighboring}, where only state perturbation is considered. Moreover, we treat the ENE problem when nominal non-optimal solution and large perturbations are present, and a multi-segment strategy is employed to guarantee constraint satisfaction in the presence of large perturbations. Furthermore, promising results are demonstrated by applying the developed control strategy to the cart inverted pendulum problem. Compared to our conference paper \cite{vahidi2022event}, we extend the NE framework so that we do not need to return the NMPC at several time steps to handle the time-varying preview perturbations, which significantly reduces the computational cost. These contributions are important extensions as in modern applications, optimal controllers frequently incorporate preview information (e.g., from a preview prediction model \cite{ahmadi2022deep}), and it is critical to adapt to the preview deviations to retain good system performance.

The paper is outlined as: Section~II describes the problem formulation and the preliminaries of the nonlinear optimal control problems. The proposed ENE framework is presented in Section~III. Simulation on the cart-inverted pendulum is presented in Section~IV. Finally, Section~V discusses conclusions and future works.

\section{Problem Formulation}
\label{Sec2}
In this section, preliminaries on nonlinear optimal control problems are reviewed, and perturbation analysis problem on the optimal solution is presented for the nonlinear systems with state and preview perturbations. Specifically, the following discrete-time nonlinear system, that incorporates a system preview, is considered as:
\begin{equation}
  \begin{aligned}
    \label{system}
    & x(k+1) = f(x(k),u(k),w(k)),
  \end{aligned}
\end{equation}
where $k \in \mathbb{N}^+$ represents the time step, $x \in \mathbb{R}^n$ denotes the measurable/observable states, and $u \in {\mathbb{R}^m}$ is the control input. Here $w \in \mathbb{R}^n$ represents the preview information that can be road profile preview in suspension controls \cite{Hajida1}, wind preview for turbine controls \cite{laks2011model}, and traffic preview in vehicle power management \cite{amini2019cabin}. Furthermore,
$f:\mathbb{R}^{n}\times \mathbb{R}^{m}\times \mathbb{R}^{n} \rightarrow \mathbb{R}^n$ represents the system dynamics with $f(0,0,0)=0$. Moreover, we consider the following general nominal preview model:
\begin{equation}
  \begin{aligned}
    \label{preview}
    & w(k+1) = g(x(k),w(k)),
  \end{aligned}
\end{equation}
where $g:\mathbb{R}^n\times \mathbb{R}^n \rightarrow \mathbb{R}^n$ represents the nominal preview dynamics. 

We consider the following safety constraints for the system: 
\begin{equation}
    \label{safety}
  C(x(k),u(k),w(k)) \leq 0,
\end{equation}
where $C:\mathbb{R}^n\times \mathbb{R}^m\times \mathbb{R}^n \rightarrow \mathbb{R}^l$.

\begin{definition}[Closed-Loop Performance]
\label{def1}
Consider the nonlinear system \eqref{system} and the control objective of regulating the state $x$. Starting from the initial conditions $x_0$ and $w_0$, the closed-loop system performance over $N$ steps is characterized by the following cost function:
\begin{equation}
    \label{cost}
    \begin{aligned}
  J_N(\mathbf{x},\mathbf{u},\mathbf{w})= \sum^{N-1}_{k=0} \phi(x(k),u(k),w(k)) + \psi(x(N),w(N)), 
  \end{aligned}
\end{equation}
where $\mathbf{x} = \left[ x(0),\, x(1),\, \cdots,\, x(N)  \right]$, $\mathbf{u} = \left[ u(0),\, u(1),\, \cdots,\, u(N-1)  \right]$, $\mathbf{w} = \left[ w(0),\, w(1),\, \cdots,\, w(N)  \right]$, and $\phi(x,u,w)$ and $\psi(x,w)$ denote the stage and terminal costs, respectively.
\end{definition}

\begin{assumption}[Twice Differentiable Functions]
\label{ass1}
The functions $f$, $g$, $C$, $\phi$, and $\psi$ are twice continuously differentiable.
\end{assumption}

With the defined closed-loop performance metric, the control goal is to minimize the cost function \eqref{cost} while adhering to the constraints in \eqref{system}-\eqref{safety}. The optimal control aims at optimizing the system performance over $N$ future steps for the system \eqref{system} using the nominal preview model \eqref{preview}, which is expressed as the following constrained optimization problem:
\begin{equation}
  \begin{aligned}
    \label{NMPC}
    &(\mathbf{x}^{o},\mathbf{u}^{o}) = \underset{\mathbf{x},\mathbf{u}}{\arg\min} \hspace{1 mm} J_N(\mathbf{x},\mathbf{u},\mathbf{w})\\
    &\text{s.t.} \hspace{5 mm} x(k+1) = f(x(k),u(k),w(k)),\\
    & \hspace{10 mm} w(k+1) = g(x(k),w(k)),\\
    & \hspace{10 mm} C(x(k),u(k),w(k)) \leq 0,\\
    & \hspace{10 mm} x(0)=x_0, \hspace{2 mm}  w(0)=w_0.
  \end{aligned}
\end{equation}

Consider a nominal trajectory $\mathbf{x}^{o}$, $\mathbf{u}^{o}$, and $\mathbf{w}^{o}$ obtained by solving (\ref{NMPC}) with $\mathbf{w}^{o}$ being the nominal preview. This computation can be performed on a remote powerful controller (e.g., cloud computing or edge computing) or can be computed ahead of time based on an approximated initial state. 
During implementation, the actual state $x(k)$ and the preview information $w(k)$ will likely deviate from the nominal trajectory. Let $\delta x(k) = x(k) - x^{o}(k)$ and $\delta w(k) = w(k) - w^{o}(k)$ denote the state perturbation and the preview perturbation, respectively. Now, to solve the nonlinear optimal control problem (\ref{NMPC}) for the actual values at each time step $k$, we seek a (sub-)optimal control update law, $u^{*}(k) = u^{o}(k) + \delta u(k)$, to efficiently adapt to the perturbations of the nominal trajectory. As such, using the nominal trajectory and the perturbation analysis, we develop an ENE framework to account for both state and preview perturbations through two time-varying feedback gains. Moreover, to handle large perturbations,
\begin{figure}[!ht]
     \centering
     \includegraphics[width=0.495\textwidth]{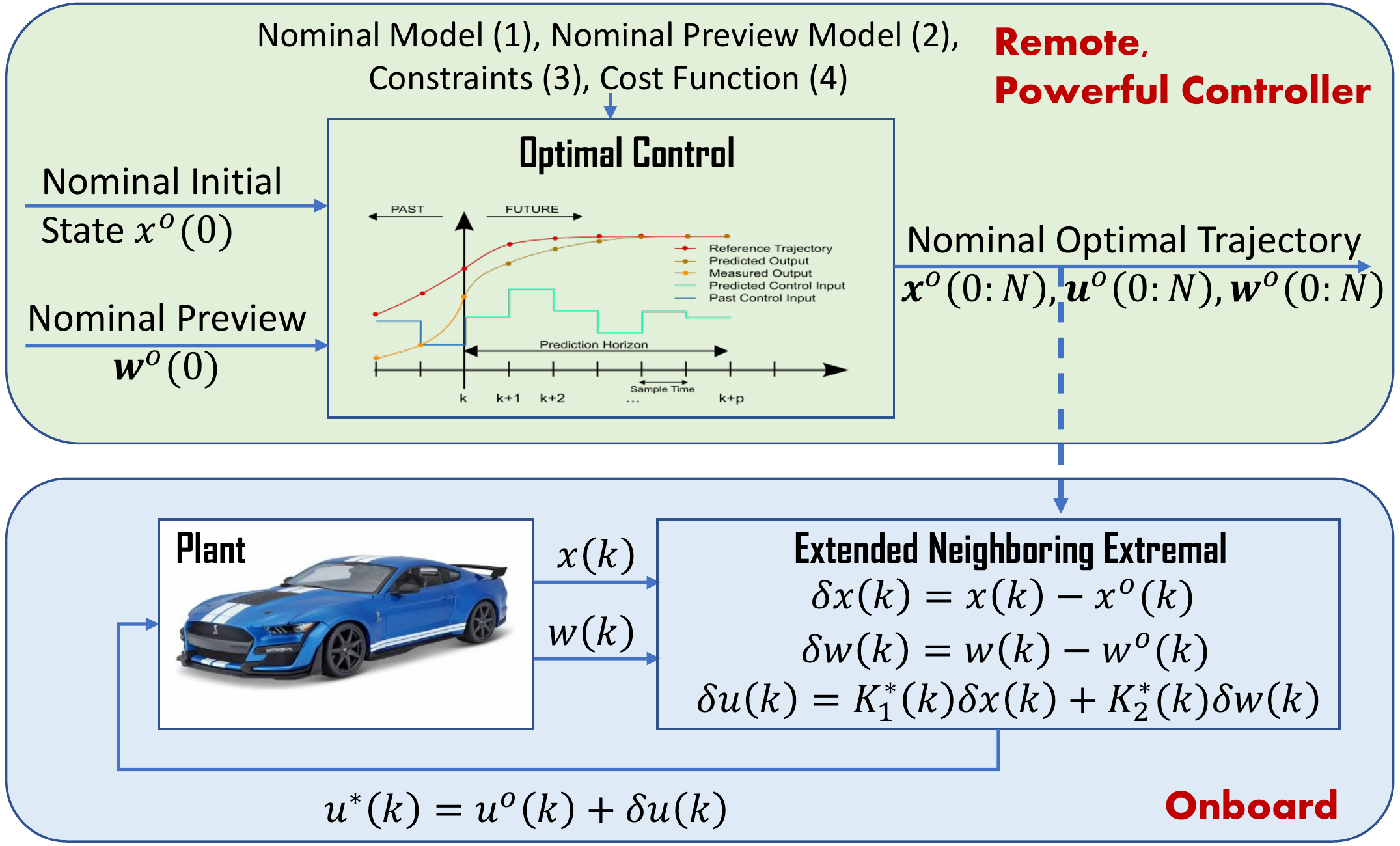}
    \caption{Schematic of the  extended neighboring extremal where the nominal solution is efficiently updated online based on state and preview perturbations.}
     \label{Fig0}
\end{figure}
we modify the ENE algorithm to preserve constraint satisfaction and retain optimal control performance. The details of each algorithm and their benefits for nonlinear model predictive control will be presented in the next part.

\section{Main Result}
In this section, we present an ENE framework for the optimal control problem \eqref{NMPC} subject to state and preview perturbations. As shown in Fig. 1, a nominal state and control trajectory is first computed based on system specifications (e.g., nominal model, nominal preview model, constraints, and cost function) along with a nominal initial state and preview. Then, the ENE approach exploits time-varying feedback gains to adapt to state and preview perturbations to retain optimal control performance. In the following subsections, we first analyze the nominal optimal solution to \eqref{NMPC} and perform the perturbation analysis to obtain an efficient optimal feedback law for small perturbations. We then develop schemes to handle large perturbations and maintain well control performance and constraint satisfaction.

\subsection{Nominal Optimal Solution}
In this subsection, we analyze the nominal optimal solutions using the Karush-Kuhn-Tucker (KKT) conditions. Specifically, define $\mathbb{K}^a$ and $\mathbb{K}^i$ as the sets of time steps at which the constrains are active (i.e., $C(x(k),u(k),w(k)) = 0$ in \eqref{safety}) and inactive (i.e., $C(x(k),u(k),w(k)) < 0$), respectively. From \eqref{NMPC}, the Hamiltonian function and the augmented cost function are defined as:
\begin{equation}
\small
  \begin{aligned}
    \label{Hamiltonian}
    &H(k) = \phi(x(k),u(k),w(k)) + \lambda^{T}(k+1) f(x(k),u(k),w(k)) \\ 
    & \hspace{11 mm} + \bar{\lambda}^{T}(k+1) g(x(k),w(k)) + \mu^{T}(k) C^{a}(x(k),u(k),w(k)),
  \end{aligned}
\end{equation}
\begin{equation}
\small
  \begin{aligned}
    \label{aug-cost}
  &\bar{J}_N(k) = \sum^{N-1}_{k=0} (H(k) - \lambda^{T}(k+1) x(k+1) - \bar{\lambda}^{T}(k+1) w(k+1)) \\ 
  & \hspace{13 mm} + \psi(x(N),w(N)),
    \end{aligned}
\end{equation}
where $C^{a}(x(k),u(k),w(k))$ represents the active constraints at the time step $k$. It is worth noting that $C^{a}(x(k),u(k),w(k))$ is an empty vector for inactive constraints, and $C^{a}(x(k),u(k),w(k)) \in \mathbb{R}^{l^a}$ if we have $l^a$ (out of $l$) active constraints. Furthermore, $\mu(k) \in \mathbb{R}^{l^a}$ is the Lagrange multiplier for the active constraints, and $\lambda(k+1) \in \mathbb{R}^n$ and $\bar{\lambda}(k+1) \in \mathbb{R}^n$ represent the Lagrange multipliers for the system dynamics \eqref{system} and the nominal preview model \eqref{preview}, respectively. It is worth noting that the Lagrange multipliers $\mu(k)$, $\lambda(k+1)$, and $\bar{\lambda}(k+1)$ are also referred as the co-states.

\begin{assumption}[Active Constraints]
\label{ass2}
At each time step $k$, the number of active constraints is not greater than $m$, i.e., $C^a_{u}(k)$ is full row rank.
\end{assumption}

Since $x^{o}(k)$, $u^{o}(k)$, and $w^{o}(k)$ ($k\in [0,N]$) represent the nominal optimal solution for the nonlinear optimal control problem \eqref{NMPC}, they satisfy the following KKT conditions for the augmented cost function \eqref{aug-cost}:

\begin{equation}
  \begin{aligned}
    \label{KKT}
    &H_{u}(k) = 0, \hspace{1 mm} k = 0, 1, ..., N-1, \\
    &\lambda(k) = H_{x}(k), \hspace{1 mm} k = 0, 1, ..., N-1, \\
    &\lambda(N) = \psi_{x}(x(N),w(N)), \\
    &\bar{\lambda}(k) = H_{w}(k), \hspace{1 mm} k = 0, 1, ..., N-1, \\
    &\bar{\lambda}(N) = \psi_{w}(x(N),w(N)), \\
    &\mu(k) \geq 0, \hspace{1 mm} k = 0, 1, ..., N-1,
  \end{aligned}
\end{equation}
where the subscripts $u$, $x$, and $w$ represent the partial derivatives of a function. 

Now, using the KKT conditions and the nominal solution $x^{o}(k)$, $u^{o}(k)$, and $w^{o}(k)$, one can calculate the Lagrange multipliers $\mu(k)$, $\lambda(k+1)$, and $\bar{\lambda}(k+1)$ online, that is, from (\ref{KKT}), it follows that
\begin{equation}
  \begin{aligned}
    \label{Optimal KKT}
    &\hspace{2.5 mm} 0 \hspace{3.75 mm} = \phi_{u}(x^{o},u^{o},w^{o}) + \lambda^{T}(k+1) f_{u}(x^{o},u^{o},w^{o})\\
    & \hspace{11.5 mm} + \mu^{T}(k) C^a_{u}(x^{o},u^{o},w^{o}),\\
    & \hspace{0.5 mm} \lambda(k) \hspace{0.75 mm} = \phi_{x}(x^{o},u^{o},w^{o}) + \lambda^{T}(k+1) f_{x}(x^{o},u^{o},w^{o}) \\
    & \hspace{11.5 mm} + \bar{\lambda}^{T}(k+1) g_{x}(x^{o},w^{o}) + \mu^{T}(k) C^a_{x}(x^{o},u^{o},w^{o}),\\
    &\lambda(N) = \psi_{x}(x^{o}(N),w^{o}(N)),\\
    & \hspace{0.5 mm} \bar{\lambda}(k) \hspace{0.75 mm} = \phi_{w}(x^{o},u^{o},w^{o}) + \lambda^{T}(k+1) f_{w}(x^{o},u^{o},w^{o}) \\
    & \hspace{11.5 mm} + \bar{\lambda}^{T}(k+1) g_{w}(x^{o},w^{o}) + \mu^{T}(k) C^a_{w}(x^{o},u^{o},w^{o}),\\
    &\bar{\lambda}(N) = \psi_{w}(x^{o}(N),w^{o}(N)).
  \end{aligned}
\end{equation}

Using the above equations, the Lagrange multipliers can be obtained as:
\begin{equation}
  \begin{aligned}
    \label{Lagrange multipliers}
    &\mu(k) = -{(C^a_{u}(k) {C^a_{u}}^T(k))}^{-1} C^a_{u}(k) \phi^T_{u}(k)\\ 
    & \hspace{10.5 mm} -{(C^a_{u}(k) {C^a_{u}}^T(k))}^{-1} C^a_{u}(k) f^T_{u}(k) \lambda(k+1),\\
    &\lambda(k) = \phi_{x}(k) + \lambda^{T}(k+1) f_{x}(k) + \bar{\lambda}^{T}(k+1) g_{x}(k)\\
    & \hspace{10.5 mm}  + \mu^{T}(k) C^a_{x}(k),\\
    &\bar{\lambda}(k) = \phi_{w}(k) + \lambda^{T}(k+1) f_{w}(k) + \bar{\lambda}^{T}(k+1) g_{w}(k)\\
    & \hspace{10.5 mm}  + \mu^{T}(k) C^a_{w}(k).
  \end{aligned}
\end{equation}

Note that Assumption \ref{ass2} guarantees that $C^a_{u}(k) {C^a_{u}}^T(k)$ is invertible,
and $\delta \bar{J}_N(x^{o},u^{o},w^{o},\mu^{o}, \lambda^{o}, \bar{\lambda}^{o}) = 0$.

\subsection{Extended Neighboring Extremal}
\label{Sec3}
For this part, we assume that the state and preview perturbations are small enough such that they do not change the activity status of the constraint. To adapt to state and preview perturbations from the nominal values, the ENE seeks to minimize the second-order variation of \eqref{aug-cost} subject to linearized models and constraints. More specifically, the ENE algorithm solves the following optimization problem with the initial conditions $\delta x(0)$ and $\delta w(0)$ as: 
\begin{equation}
  \begin{aligned}
    \label{NE}
    &\mathbf{\delta u^{*}} = \underset{\mathbf{\delta u}}{\arg\min} \hspace{1 mm} {J}^{ne}_N(k)\\
  &\text{s.t.} \hspace{5 mm} \delta x(k+1) = f_{x}(k) \delta x(k) + f_{u}(k) \delta u(k)+f_{w}(k) \delta w(k),\\ 
  & \hspace{10 mm} \delta w(k+1) = g_{x}(k) \delta x(k) + g_{w}(k) \delta w(k),\\
  & \hspace{10 mm} C^{a}_{x}(k) \delta x(k) + C^{a}_{u}(k) \delta u(k) + C^{a}_{w}(k) \delta w(k) = 0,
  \\
  & \hspace{10 mm} \delta x(0) = \delta x_{0}, \hspace{2 mm} \delta w(0) = \delta w_{0},
  \end{aligned}
\end{equation}
where \begin{equation}
  \begin{aligned}
    \label{NE-cost}
  &{J}^{ne}_N(k)=\delta^{2} \bar{J}_N(k) = \\
  &\frac{1}{2} \sum^{N-1}_{k=0} 
  \begin{bmatrix}
  \delta x(k)\\ \delta u(k)\\ \delta w(k)
  \end{bmatrix}^{T}
  \begin{bmatrix}
  H_{xx}(k) & H_{xu}(k) & H_{xw}(k)\\
  H_{ux}(k) & H_{uu}(k) & H_{uw}(k)\\
  H_{wx}(k) & H_{wu}(k) & H_{ww}(k)
  \end{bmatrix}
  \begin{bmatrix}
  \delta x(k)\\ \delta u(k)\\ \delta w(k)
  \end{bmatrix} \\
  &+ \frac{1}{2} \delta x^{T} (N) \psi_{xx}(N) \delta x (N) + \frac{1}{2} \delta w^{T} (N) \psi_{ww}(N) \delta w (N)
    \end{aligned}
\end{equation}

For \eqref{NE} and \eqref{NE-cost}, the Hamiltonian function and the augmented cost function are obtained as
\begin{equation}
  \begin{aligned}
    \label{NE-Hamiltonian}
    &H^{ne}(k) = \\
    &\frac{1}{2} 
    \begin{bmatrix}
    \delta x(k)\\ \delta u(k)\\ \delta w(k)
    \end{bmatrix}^{T}
    \begin{bmatrix}
    H_{xx}(k) & H_{xu}(k) & H_{xw}(k)\\
    H_{ux}(k) & H_{uu}(k) & H_{uw}(k)\\
    H_{wx}(k) & H_{wu}(k) & H_{ww}(k)
    \end{bmatrix}
    \begin{bmatrix}
    \delta x(k)\\ \delta u(k)\\ \delta w(k)
    \end{bmatrix} \\
    &+ \delta \lambda^{T}(k+1) (f_{x}(k) \delta x(k) + f_{u}(k) \delta u(k) + f_{w}(k) \delta w(k))\\
    &+ \delta \bar{\lambda}^{T}(k+1) (g_{x}(k) \delta x(k) + g_{w}(k) \delta w(k))\\
    &+ \delta \mu^{T}(k) (C^{a}_{x}(k) \delta x(k) + C^{a}_{u}(k) \delta u(k) + C^{a}_{w}(k) \delta w(k)),
  \end{aligned}
\end{equation}
\begin{equation}
\small
  \begin{aligned}
    \label{NE-aug-cost}
  &\bar{J}^{ne}_N(k) = \\
  &\sum^{N-1}_{k=0} (H^{ne}(k) - \delta \lambda^{T}(k+1) \delta x(k+1)  - \delta \bar{\lambda}^{T}(k+1) \delta w(k+1)) \\ 
  &+ \frac{1}{2} \delta {x}^{T} (N) \psi_{xx}(N) \delta x (N) + \frac{1}{2} \delta {w}^{T} (N) \psi_{ww}(N) \delta w (N),
    \end{aligned}
\end{equation}
where $\delta \mu(k)$, $\delta \lambda(k)$, and $\delta \bar{\lambda}(k)$ are the Lagrange multipliers.
By applying the KKT conditions to \eqref{NE-aug-cost}, one has
\begin{equation}
  \begin{aligned}
    \label{NE-KKT}
    &H^{ne}_{\delta u}(k) = 0, \hspace{1 mm} k = 0, 1, ..., N-1, \\
    &\delta \lambda(k) = H^{ne}_{\delta x}(k), \hspace{1 mm} k = 0, 1, ..., N-1, \\
    &\delta \lambda(N) = \psi_{xx}(N) \delta x(N), \\
    &\delta \bar{\lambda}(k) = H^{ne}_{\delta w}(k), \hspace{1 mm} k = 0, 1, ..., N-1, \\
    &\delta \bar{\lambda}(N) = \psi_{ww}(N) \delta w(N), \\
    &\delta \mu(k) \geq 0, \hspace{1 mm} k = 0, 1, ..., N-1.
      \end{aligned}
\end{equation}

To facilitate the development of the ENE algorithm, several auxiliary variables are introduced for the following theorem. Specifically, for $k=1,2, \cdots, N-1$, $S(k)$, $W(k)$, $\bar{S}(k)$, and $\bar{W}(k)$ are defined as 
\begin{equation}
  \begin{aligned}
    \label{S_k}
    &S(k) = Z_{xx}(k) -
    \begin{bmatrix}
    Z_{xu}(k) \hspace{1 mm} {C^a_{x}}^T(k)
    \end{bmatrix}
    K^o(k)
    \begin{bmatrix}
    Z_{ux}(k)\\
    C^a_{x}(k)
    \end{bmatrix}, \\
    &W(k) = Z_{x w}(k) -
    \begin{bmatrix}
    Z_{xu}(k) \hspace{1 mm} {C^a_{x}}^T(k)
    \end{bmatrix}
    K^o(k)
    \begin{bmatrix}
    Z_{u w}(k)\\
    C^a_{w}(k)
    \end{bmatrix},
      \end{aligned}
\end{equation}
\vspace{-12pt}
\begin{equation}
  \begin{aligned}
    \label{S_k_bar}
    &\bar{S}(k) = Z_{wx}(k) -
    \begin{bmatrix}
    Z_{wu}(k) \hspace{1 mm} {C^a_{w}}^T(k)
    \end{bmatrix}
    K^o(k)
    \begin{bmatrix}
    Z_{ux}(k)\\
    C^a_{x}(k)
    \end{bmatrix}, \\
    &\bar{W}(k) = Z_{ww}(k) -
    \begin{bmatrix}
    Z_{wu}(k) \hspace{1 mm} {C^a_{w}}^T(k)
    \end{bmatrix}
    K^o(k)
    \begin{bmatrix}
    Z_{u w}(k)\\
    C^a_{w}(k)
    \end{bmatrix},
      \end{aligned}
\end{equation}
where
\begin{equation} 
\label{K0}
 K^{o}(k) =\begin{cases} 
    \begin{bmatrix}
    Z_{uu}(k) & {C^a_{u}}^T(k)\\
    C^a_{u}(k) & 0
    \end{bmatrix}^{-1} & \text{if} \hspace{2 mm} k \in \mathbb{K}^a, \vspace{2 mm}\\
    \begin{bmatrix}
    Z^{-1}_{uu}(k) & 0\\
    0 & 0
    \end{bmatrix} &  \text{if} \hspace{2 mm} k \in \mathbb{K}^i,
    \end{cases}
\end{equation}
and
\begin{equation}
  \begin{aligned}
    \label{Zuu}
    &Z_{ux}(k) = H_{ux}(k) + f^{T}_{u}(k) S(k+1) f_{x}(k)\\
    & \hspace{14.5 mm}+ f^{T}_{u}(k) W(k+1) g_{x}(k),\\
    &Z_{uu}(k) = H_{uu}(k) + f^{T}_{u}(k) S(k+1) f_{u}(k),\\
    &Z_{u w}(k) = H_{u w}(k) + f^{T}_{u}(k) S(k+1) f_{w}(k)\\
    & \hspace{14.5 mm} + f^{T}_{u}(k) W(k+1) g_{w}(k),\\
    &Z_{xx}(k) = H_{xx}(k) + f^{T}_{x}(k) S(k+1) f_{x}(k)\\
    & \hspace{14.5 mm} +f^{T}_{x}(k) W(k+1) g_{x}(k) + g^{T}_{x}(k) \bar{S}(k+1) f_{x}(k) \\
    & \hspace{14.5 mm}+ g^{T}_{x}(k) \bar{W}(k+1) g_{x}(k),\\
    &Z_{xu}(k) = H_{xu}(k) + f^{T}_{x}(k) S(k+1) f_{u}(k)\\
    & \hspace{14.5 mm} + g^{T}_{x}(k) \bar{S}(k+1) f_{u}(k),\\
    &Z_{x w}(k) = H_{x w}(k) + f^{T}_{x}(k) S(k+1) f_{w}(k)\\
    & \hspace{14.5 mm}+ f^{T}_{x}(k) W(k+1) g_{w}(k)+ g^{T}_{x}(k) \bar{S}(k+1) f_{w}(k)\\
    & \hspace{14.5 mm} + g^{T}_{x}(k) \bar{W}(k+1) g_{w}(k),\\
    &Z_{wx}(k) = H_{wx}(k) + f^{T}_{w}(k) S(k+1) f_{x}(k)\\
    & \hspace{14.5 mm} + f^{T}_{w}(k) W(k+1) g_{x}(k)+ g^{T}_{w}(k) \bar{S}(k+1) f_{x}(k)\\
    & \hspace{14.5 mm} + g^{T}_{w}(k) \bar{W}(k+1) g_{x}(k),\\
    &Z_{wu}(k) = H_{wu}(k) + f^{T}_{w}(k) S(k+1) f_{u}(k)\\
    & \hspace{14.5 mm} + g^{T}_{w}(k) \bar{S}(k+1) f_{u}(k),\\    
    &Z_{ww}(k) = H_{ww}(k) + f^{T}_{w}(k) S(k+1) f_{w}(k)\\
    & \hspace{14.5 mm} + f^{T}_{w}(k) W(k+1) g_{w}(k) + g^{T}_{w}(k) \bar{S}(k+1) f_{w}(k)\\
    & \hspace{14.5 mm} + g^{T}_{w}(k) \bar{W}(k+1) g_{w}(k).
      \end{aligned}
\end{equation}

The terminal conditions for $S(k)$, $W(k)$, $\bar{S}(k)$, and $\bar{W}(k)$ are given by
\begin{equation}
    \label{S_N}
    S(N) = \psi_{xx}(N), \qquad W(N) = 0,
\end{equation}
\begin{equation}
    \label{S_N_bar}
    \bar{S}(N) = 0, \qquad \bar{W}(N) = \psi_{ww}(N).
\end{equation}

\begin{theorem} [Extended Neighboring Extremal]
\label{theo1}
Consider the optimization problem \eqref{NE}, the Hamiltonian function \eqref{NE-Hamiltonian}, the KKT conditions \eqref{NE-KKT}, and the defined auxiliary variables \eqref{S_k} and \eqref{S_k_bar}. If $Z_{uu}(k)>0$ for $k\in \begin{bmatrix}
0, N-1
\end{bmatrix}$, then the ENE policy
\begin{equation}
  \begin{aligned}
    \label{law}
    &\delta u(k) = K^{*}_1(k) \delta x(k) + K^{*}_2(k) \delta w(k),\\
    &K^{*}_1(k) = -
    \begin{bmatrix}
    I & 0
    \end{bmatrix}
    K^o(k)
    \begin{bmatrix}
    Z_{ux}(k)\\
    {C^a_{x}}(k)
    \end{bmatrix},\\
    &K^{*}_2(k) = -
    \begin{bmatrix}
    I & 0
    \end{bmatrix}
    K^o(k)
    \begin{bmatrix}
    Z_{u w}(k)\\
    {C^a_{w}}(k)
    \end{bmatrix},\\
  \end{aligned}
\end{equation}
approximates the perturbed solution for the nonlinear optimal control problem \eqref{NMPC} in the presence of state perturbation $\delta x(k)$ and preview perturbation $\delta w(k)$.
\end{theorem}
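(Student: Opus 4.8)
The plan is to establish the theorem by a backward sweep on the linear--quadratic subproblem \eqref{NE}--\eqref{NE-cost}, showing inductively (from $k=N$ down to $k=0$) that its co-states admit the affine-in-perturbation representations
\begin{equation*}
\delta\lambda(k)=S(k)\,\delta x(k)+W(k)\,\delta w(k),\qquad \delta\bar\lambda(k)=\bar S(k)\,\delta x(k)+\bar W(k)\,\delta w(k),
\end{equation*}
with $S,W,\bar S,\bar W$ obeying the recursions implied by \eqref{S_k}--\eqref{Zuu} and the terminal data \eqref{S_N}--\eqref{S_N_bar}. The base case is immediate: the transversality conditions in \eqref{NE-KKT} read $\delta\lambda(N)=\psi_{xx}(N)\delta x(N)$ and $\delta\bar\lambda(N)=\psi_{ww}(N)\delta w(N)$, i.e.\ $S(N)=\psi_{xx}(N)$, $W(N)=0$, $\bar S(N)=0$, $\bar W(N)=\psi_{ww}(N)$.

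For the inductive step I would assume the representation at $k+1$ and substitute it, together with the linearized dynamics $\delta x(k+1)=f_x\delta x+f_u\delta u+f_w\delta w$ and $\delta w(k+1)=g_x\delta x+g_w\delta w$, into the stationarity condition $H^{ne}_{\delta u}(k)=0$ of \eqref{NE-KKT}. Collecting coefficients turns it into $Z_{ux}(k)\delta x(k)+Z_{uu}(k)\delta u(k)+Z_{uw}(k)\delta w(k)+{C^a_{u}}^T(k)\delta\mu(k)=0$, with $Z_{ux},Z_{uu},Z_{uw}$ precisely the quantities in \eqref{Zuu}; adjoining the linearized active-constraint equation $C^a_x\delta x+C^a_u\delta u+C^a_w\delta w=0$ yields the saddle-point system
\begin{equation*}
\begin{bmatrix} Z_{uu}(k) & {C^a_{u}}^T(k)\\ C^a_{u}(k) & 0\end{bmatrix}\begin{bmatrix}\delta u(k)\\ \delta\mu(k)\end{bmatrix}=-\begin{bmatrix} Z_{ux}(k) & Z_{uw}(k)\\ C^a_{x}(k) & C^a_{w}(k)\end{bmatrix}\begin{bmatrix}\delta x(k)\\ \delta w(k)\end{bmatrix}.
\end{equation*}
The hypothesis $Z_{uu}(k)>0$ together with Assumption~\ref{ass2} (full row rank of $C^a_{u}(k)$) makes the coefficient matrix invertible with inverse $K^o(k)$ as in \eqref{K0}; in the inactive case $\delta\mu(k)$ is absent and the system reduces to $Z_{uu}(k)\delta u(k)=-Z_{ux}(k)\delta x(k)-Z_{uw}(k)\delta w(k)$. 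Reading off the first block row gives $\delta u(k)=K^*_1(k)\delta x(k)+K^*_2(k)\delta w(k)$ with $K^*_1,K^*_2$ exactly as in \eqref{law}.

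To close the induction I would feed this $\delta u(k)$ and the accompanying $\delta\mu(k)$ back into the co-state recursions $\delta\lambda(k)=H^{ne}_{\delta x}(k)$ and $\delta\bar\lambda(k)=H^{ne}_{\delta w}(k)$, again using the assumed affine forms at $k+1$ and the linearized dynamics. Collecting coefficients recasts these into $Z_{xx}\delta x+Z_{xu}\delta u+Z_{xw}\delta w+{C^a_{x}}^T\delta\mu$ and $Z_{wx}\delta x+Z_{wu}\delta u+Z_{ww}\delta w+{C^a_{w}}^T\delta\mu$ --- matching \eqref{Zuu} --- and eliminating $(\delta u,\delta\mu)$ through $K^o(k)$ reproduces exactly $S(k),W(k)$ of \eqref{S_k} and $\bar S(k),\bar W(k)$ of \eqref{S_k_bar}, which completes the induction and hence \eqref{law}. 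For the concluding ``approximates the perturbed solution'' claim, I would observe that ${J}^{ne}_N$ is the exact second variation of the augmented cost \eqref{aug-cost} at the nominal extremal and that the constraints in \eqref{NE} are the first-order expansions of those in \eqref{NMPC}; under the standing small-perturbation assumption (the active set is unchanged, so $\mu^o(k)+\delta\mu(k)\ge0$), a standard implicit-function/Taylor argument at the KKT point of \eqref{NMPC} then shows that $u^o(k)+\delta u(k)$ agrees with the true perturbed optimizer up to higher-order terms in $(\delta x_0,\delta w_0)$.

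I expect the main obstacle to be organizational rather than conceptual: correctly bookkeeping the eight coupled matrices in \eqref{Zuu}, in particular the terms in which the preview dynamics $g$ route the preview perturbation into the $x$-co-state (the $f^T_x W$, $g^T_x\bar S$, $g^T_x\bar W$ contributions to $Z_{xx}$, and their analogues in $Z_{xw},Z_{wx},Z_{ww}$), while verifying that the active/inactive dichotomy of \eqref{K0} is absorbed uniformly so that the single pair of formulas \eqref{law} holds at every step. A secondary point requiring care is positivity: one should confirm that $Z_{uu}(k)>0$, propagated backward, keeps the reduced Hessian of \eqref{NE} positive definite on the constraint null space, so that the computed stationary point is genuinely the minimizer of the convex subproblem \eqref{NE} rather than merely a KKT point.
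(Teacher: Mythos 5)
Your proposal is correct and follows essentially the same route as the paper: a backward sweep in which the affine co-state ansatz is combined with the stationarity condition and the linearized constraint to invert the saddle-point matrix $K^o(k)$, then substituted back to recover the recursions \eqref{S_k}--\eqref{Zuu}. The only cosmetic difference is that the paper includes offset terms $T(k),\bar T(k)$ in the co-state ansatz and shows they vanish from $T(N)=\bar T(N)=0$ (these offsets are reused later for the non-optimal-nominal case), whereas your induction builds the zero offset in from the start, which is equally valid here.
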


\begin{proof}
Using \eqref{NE-Hamiltonian}, \eqref{NE-aug-cost}, and the KKT conditions \eqref{NE-KKT}, one has 
\begin{equation}
  \begin{aligned}
    \label{H_u}
    &H_{ux}(k) \delta x(k) + H_{uu}(k) \delta u(k) + H_{u w}(k) \delta w(k) \\
    &+ f^T_{u}(k) \delta \lambda(k+1) + {C^a_{u}}^T(k) \delta \mu(k) = 0,
      \end{aligned}
\end{equation}
\begin{equation}
  \begin{aligned}
    \label{H_x}
    &\delta \lambda(k) = H_{xx}(k) \delta x(k) + H_{xu}(k) \delta u(k) + H_{x w}(k) \delta w(k)\\
    &+ f^T_{x}(k) \delta \lambda(k+1) + g^T_{x}(k) \delta \bar{\lambda}(k+1) + {C^a_{x}}^T(k) \delta \mu(k),
      \end{aligned}
\end{equation}
\begin{equation}
  \begin{aligned}
    \label{H_w}
    &\delta \bar{\lambda}(k) = H_{wx}(k) \delta x(k) + H_{wu}(k) \delta u(k) + H_{ww}(k) \delta w(k)\\
    &+ f^T_{w}(k) \delta \lambda(k+1) + g^T_{w}(k) \delta \bar{\lambda}(k+1) + {C^a_{w}}^T(k) \delta \mu(k),
      \end{aligned}
\end{equation}
where $\delta \lambda(N) = \psi_{xx}(N) \delta x(N)$ and $\delta \bar{\lambda}(N) = \psi_{ww}(N) \delta w(N)$. 
Now, define the following general relation:
\begin{equation}
  \begin{aligned}
    \label{lambda_form}
    &\delta \lambda(k) = S(k) \delta x(k) + W(k) \delta w(k) +T(k),
      \end{aligned}
\end{equation}
\begin{equation}
  \begin{aligned}
    \label{lambda_form_bar}
    &\delta \bar{\lambda}(k) = \bar{S}(k) \delta x(k) + \bar{W}(k) \delta w(k) +\bar{T}(k).
      \end{aligned}
\end{equation}
Using \eqref{NE-KKT}, 
\eqref{lambda_form}, and \eqref{lambda_form_bar}, one has $T(N) = 0$ and $\bar{T}(N) = 0$.
Substituting the linearized model \eqref{NE} and \eqref{lambda_form} into \eqref{H_u} yields
\begin{equation}
  \begin{aligned}
    \label{Z_uu}
    &Z_{ux}(k) \delta x(k) + Z_{uu}(k) \delta u(k) + Z_{u w}(k) \delta w(k) \\
    &+ {C^a_{u}}^T(k) \delta \mu(k) + f^T_{u}(k) T(k+1) = 0.
      \end{aligned}
\end{equation}
Using the linearized safety constraints \eqref{NE} and \eqref{Z_uu}, one has
\begin{equation}
  \begin{aligned}
    \label{u&mu}
    &\begin{bmatrix}
    \delta u(k)\\
    \delta \mu(k)
    \end{bmatrix}
    = -K^o(k)
    \begin{bmatrix}
    Z_{ux}(k) \\
    C^a_{x}(k)
    \end{bmatrix}
    \delta x(k) \\
    & \hspace{17 mm} -K^o(k)
    \begin{bmatrix}
    Z_{u w}(k) \\
    C^a_{w}(k)
    \end{bmatrix}
    \delta w(k) \\
    & \hspace{17 mm} -K^o(k)
    \begin{bmatrix}
    f^T_{u}(k) T(k+1) \\
    0
    \end{bmatrix}.
  \end{aligned}
\end{equation}
Now, substituting the model \eqref{NE}, \eqref{lambda_form} and \eqref{lambda_form_bar} into \eqref{H_x} yields
\begin{equation}
  \begin{aligned}
    \label{lambda1}
    &\delta \lambda(k) = Z_{xx}(k) \delta x(k) + Z_{xu}(k) \delta u(k) + Z_{x w}(k) \delta w(k)\\
    &\hspace{12 mm}+ {C^a_{x}}^T(k) \delta \mu(k) + f^T_{x}(k) T(k+1) + g^T_{x}(k) \bar{T}(k+1).
      \end{aligned}
\end{equation}
Furthermore, substituting \eqref{u&mu} into \eqref{lambda1} yields
\begin{equation}
\small
  \begin{aligned}
    \label{lambda2}
    &\delta \lambda(k) = \left( Z_{xx}(k) -
    \begin{bmatrix}
    Z_{xu}(k) \hspace{1 mm} {C^{a}_{x}}^T(k)
    \end{bmatrix}
    K^o(k)
    \begin{bmatrix}
    Z_{ux}(k)\\
    C^{a}_{x}(k)
    \end{bmatrix} \right) \delta x(k) \\
    &+ \left( Z_{x w}(k) -
    \begin{bmatrix}
    Z_{xu}(k) \hspace{1 mm} {C^{a}_{x}}^T(k)
    \end{bmatrix}
    K^o(k)
    \begin{bmatrix}
    Z_{u w}(k)\\
    C^a_{w}(k)
    \end{bmatrix} \right) \delta w(k) \\
    &+ f^T_{x}(k) T(k+1) -
    \begin{bmatrix}
    Z_{xu}(k) \hspace{1 mm} {C^{a}_{x}}^T(k)
    \end{bmatrix}
    K^o(k)
    \begin{bmatrix}
    f^T_{u}(k) T(k+1)\\
    0
    \end{bmatrix}\\
    &+ g^T_{x}(k) \bar{T}(k+1).
      \end{aligned}
\end{equation}
From \eqref{S_k}, \eqref{lambda_form} and \eqref{lambda2}, it can be concluded that
\begin{equation}
  \begin{aligned}
    \label{T_k}
    &T(k) = g^T_{x}(k) \bar{T}(k+1) + f^T_{x}(k) T(k+1) \\
    & \hspace{11 mm} - \begin{bmatrix}
    Z_{xu}(k) \hspace{1 mm} {C^{a}_{x}}^T(k)
    \end{bmatrix}
    K^o(k)
    \begin{bmatrix}
    f^T_{u}(k) T(k+1)\\
    0
    \end{bmatrix}.
      \end{aligned}
\end{equation}
Now, substituting the model \eqref{NE}, \eqref{lambda_form} and \eqref{lambda_form_bar} into \eqref{H_w} yields
\begin{equation}
  \begin{aligned}
    \label{lambda_bar1}
    &\delta \bar{\lambda}(k) = Z_{wx}(k) \delta x(k) + Z_{wu}(k) \delta u(k) + Z_{ww}(k) \delta w(k)\\
    &\hspace{12 mm}+ {C^a_{w}}^T(k) \delta \mu(k) + f^T_{w}(k) T(k+1) + g^T_{w}(k) \bar{T}(k+1).
      \end{aligned}
\end{equation}
Furthermore, plugging \eqref{u&mu} into \eqref{lambda_bar1} yields
\begin{equation}
\small
  \begin{aligned}
    \label{lambda_bar2}
    &\delta\bar{\lambda}(k) = \left( Z_{wx}(k) -
    \begin{bmatrix}
    Z_{wu}(k) \hspace{1 mm} {C^{a}_{w}}^T(k)
    \end{bmatrix}
    K^o(k)
    \begin{bmatrix}
    Z_{ux}(k)\\
    C^{a}_{x}(k)
    \end{bmatrix} \right) \delta x(k) \\
    &+ \left( Z_{w w}(k) -
    \begin{bmatrix}
    Z_{wu}(k) \hspace{1 mm} {C^{a}_{w}}^T(k)
    \end{bmatrix}
    K^o(k)
    \begin{bmatrix}
    Z_{u w}(k)\\
    C^a_{w}(k)
    \end{bmatrix} \right) \delta w(k) \\
    &+ f^T_{w}(k) T(k+1) -
    \begin{bmatrix}
    Z_{wu}(k) \hspace{1 mm} {C^{a}_{w}}^T(k)
    \end{bmatrix}
    K^o(k)
    \begin{bmatrix}
    f^T_{u}(k) T(k+1)\\
    0
    \end{bmatrix}\\
    &+ g^T_{w}(k) \bar{T}(k+1). 
      \end{aligned}
\end{equation}
Using \eqref{S_k_bar}, \eqref{lambda_form_bar} and \eqref{lambda_bar2}, one has
\begin{equation}
  \begin{aligned}
    \label{T_k_bar}
    &\bar{T}(k) = g^T_{w}(k) \bar{T}(k+1) + f^T_{w}(k) T(k+1) \\
    & \hspace{11 mm} - \begin{bmatrix}
    Z_{wu}(k) \hspace{1 mm} {C^{a}_{w}}^T(k)
    \end{bmatrix}
    K^o(k)
    \begin{bmatrix}
    f^T_{u}(k) T(k+1)\\
    0
    \end{bmatrix}.
      \end{aligned}
\end{equation}
Based on \eqref{T_k}, \eqref{T_k_bar}, and the fact that $T(N)=0$, $\bar{T}(N)=0$, one can conclude that for $k\in \begin{bmatrix}
1, N-1
\end{bmatrix}$, $T(k)=0$, $\bar{T}(k)=0$. 
Thus, by using \eqref{u&mu}, the ENE policy \eqref{law} can be obtained. This completes the proof.
\end{proof}

\begin{remark} [Singularity]
\label{Singularity}
It is worth noting that the assumption of $Z_{uu}$ being positive definite (i.e., $Z_{uu}(k)>0, k \in \begin{bmatrix} 0, N-1 \end{bmatrix}$) is essential for the ENE. $Z_{uu}(k)>0$ is performed to calculate the ENE such that it guarantees the convexity of \eqref{NE}. Considering $Z_{uu}(k)>0$ and Assumption \ref{ass2}, it is clear that $K^o(k)$ in \eqref{K0} is well defined. However, when the constraints involve only state and preview (i.e., $C^a_{u}(k)=0$), or when $l^a$ is greater than $m$ (i.e., $C^a_{u}(k)$ is not full row rank), the matrix $K^o$ is singular, leading to the failure of the proposed algorithm. This issue can be solved using the constraint back-propagation algorithm presented in \cite{ghaemi2008neighboring}.
\end{remark}

\begin{remark} [Nominal Preview Model]
\label{Nominal Preview Model}
If we do not have any idea about the nominal preview model \eqref{preview} for the existing preview information in the real system, we can simply use $w(k+1)=w(k)$ as the nominal preview model for the nonlinear optimal control problem \eqref{NMPC} and the ENE algorithm. However, it is clear that we achieve the best performance using the ENE when the nominal preview model describes the preview information perfectly.
\end{remark}

Algorithm 1 summarizes the ENE procedure for adaptation the pre-computed nominal control solution $u^{o}(k)$ to the small state perturbation $\delta x(k)$ and the small preview perturbation $\delta w(k)$ such that it achieves the optimal control as $u^{*}(k) = u^{o}(k) + \delta u(k)$ using Theorem \ref{theo1}.

\begin{algorithm}[ht]
    \caption{Extended Neighboring Extremal}
    \label{ENE}
    \textbf{Input}: The functions $f$, $g$, $C$, $\phi$, and $\psi$, and the nominal trajectory $\mathbf{x}^{o}(0:N)$, $\mathbf{u}^{o}(0:N)$, and $\mathbf{w}^{o}(0:N)$.\\
    \textbf{1}: Initialize the matrices $\lambda^o (N)$, $\bar{\lambda}^o(N)$, $S(N)$, $W(N)$, $\bar{S}(N)$, and $\bar{W}(N)$ using \eqref{Optimal KKT}, \eqref{S_N} , and \eqref{S_N_bar}, respectively.\\
    \textbf{2}: Calculate, in a backward run, the Lagrange multipliers $\mu^o(k)$, $\lambda^o(k)$, and $\bar{\lambda}^o(k)$ using \eqref{Lagrange multipliers}.\\ 
    \textbf{3}: Calculate, in a backward run, the matrices $Z(k)$, the gains $K^*_1(k)$ and $K^*_2(k)$, and the matrices $S(k)$, $W(k)$, $\bar{S}(k)$, and $\bar{W}(k)$ using \eqref{Zuu}, \eqref{law}, \eqref{S_k}, and \eqref{S_k_bar}, respectively.\\
    \textbf{4}: Given $x^o(0)$, $w^o(0)$, $\delta x(0)$, and $\delta w(0)$, in a forward run, calculate $\delta u(k)$, $u^*(k)$, $x(k+1)$, and $w(k+1)$ using \eqref{law} and \eqref{system}.
\end{algorithm}

\subsection{Nominal Non-Optimal Solution and Large Perturbations}
The ENE is derived under the assumption that a nominal optimal solution is available, and the state and preview perturbations are small such that they do not change the activity status of the constraints. In this subsection, we modify the ENE policy for a nominal non-optimal solution and accordingly improve the algorithm to handle large state and preview perturbations which may change the sets of inactive and active constraints.

For the nominal non-optimal sequences $x^{o}(k)$, $u^{o}(k)$, $w^{o}(k)$, $\mu^{o}(k)$, $\lambda^{o}(k)$, and $\bar{\lambda}^{o}(k)$, we assume that they satisfy the constraints described in \eqref{NMPC} and \eqref{KKT} but may not satisfy the optimality condition $H_{u}(x^{o},u^{o},w^{o},\mu^{o}, \lambda^{o}, \bar{\lambda}^{o}) = 0$. Under this circumstance, the cost function \eqref{NE-cost} is modified as 
\begin{equation}
  \begin{aligned}
    \label{NE-cost-large}
  &{J}^{ne}_N(k)=\delta^{2} \bar{J}_N(k) + \sum^{N-1}_{k=0} {H}^{T}_u(k) \delta u(k)= \\
  &\frac{1}{2} \sum^{N-1}_{k=0} 
  \begin{bmatrix}
  \delta x(k)\\ \delta u(k)\\ \delta w(k)
  \end{bmatrix}^{T}
  \begin{bmatrix}
  H_{xx}(k) & H_{xu}(k) & H_{xw}(k)\\
  H_{ux}(k) & H_{uu}(k) & H_{uw}(k)\\
  H_{wx}(k) & H_{wu}(k) & H_{ww}(k)
  \end{bmatrix}
  \begin{bmatrix}
  \delta x(k)\\ \delta u(k)\\ \delta w(k)
  \end{bmatrix} \\
  &+ \frac{1}{2} \delta x^{T} (N) \psi_{xx}(N) \delta x (N) + \frac{1}{2} \delta w^{T} (N) \psi_{ww}(N) \delta w (N)\\
  & + \sum^{N-1}_{k=0} {H}^{T}_u(k) \delta u(k).
    \end{aligned}
\end{equation}

Considering the optimal control problem \eqref{NE} and the cost function \eqref{NE-cost-large}, the Hamiltonian function is modified as
\begin{equation}
  \begin{aligned}
    \label{NE-Hamiltonian-large}
    &H^{ne}(k) = \\
    &\frac{1}{2} 
    \begin{bmatrix}
    \delta x(k)\\ \delta u(k)\\ \delta w(k)
    \end{bmatrix}^{T}
    \begin{bmatrix}
    H_{xx}(k) & H_{xu}(k) & H_{xw}(k)\\
    H_{ux}(k) & H_{uu}(k) & H_{uw}(k)\\
    H_{wx}(k) & H_{wu}(k) & H_{ww}(k)
    \end{bmatrix}
    \begin{bmatrix}
    \delta x(k)\\ \delta u(k)\\ \delta w(k)
    \end{bmatrix} \\
    &+ \delta \lambda^{T}(k+1) (f_{x}(k) \delta x(k) + f_{u}(k) \delta u(k) + f_{w}(k) \delta w(k))\\
    &+ \delta \bar{\lambda}^{T}(k+1) (g_{x}(k) \delta x(k) + g_{w}(k) \delta w(k))\\
    &+ \delta \mu^{T}(k) (C^{a}_{x}(k) \delta x(k) + C^{a}_{u}(k) \delta u(k) + C^{a}_{w}(k) \delta w(k))\\
    &+{H}^{T}_u(k) \delta u(k).
  \end{aligned}
\end{equation}

Now, the following theorem is presented to modify the ENE policy for the nominal non-optimal solutions to the nonlinear optimal control problem \eqref{NMPC}.

\begin{theorem} [Modified Extended Neighboring Extremal]
\label{theo2}
Consider the optimization problem \eqref{NE}, the KKT conditions \eqref{NE-KKT}, and the Hamiltonian function \eqref{NE-Hamiltonian-large}. If $Z_{uu}(k)>0$ for $k\in \begin{bmatrix}
0, N-1
\end{bmatrix}$, then the ENE policy for a nominal non-optimal solution is modified as
\begin{equation}
  \begin{aligned}
    \label{law2}
    &\delta u(k) = K^{*}_1(k) \delta x(k) + K^{*}_2(k) \delta w(k)\\
    &\hspace{12 mm} + K^{*}_3(k) \begin{bmatrix}
    f^T_{u}(k) T(k+1) + {H}_u(k) \\
    0
    \end{bmatrix},\\
    &K^{*}_3(k) = -
    \begin{bmatrix}
    I & 0
    \end{bmatrix}
    K^o(k),
  \end{aligned}
\end{equation}
where the gain matrices $K^{*}_1$, $K^{*}_2$, and $K^o(k)$ are defined in \eqref{K0} and \eqref{law}, and $T(k)$ is a non-zero variable defined in \eqref{T_k-large}.
\end{theorem}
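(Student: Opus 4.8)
The plan is to reuse the machinery of the proof of Theorem~\ref{theo1} almost verbatim, exploiting the fact that the only change in \eqref{NE-Hamiltonian-large} relative to \eqref{NE-Hamiltonian} is the extra \emph{linear} term ${H}^{T}_u(k)\,\delta u(k)$. Since this term is affine in the perturbation variables, it cannot alter the second-order (Riccati-type) recursions for $S$, $W$, $\bar S$, $\bar W$ and hence cannot change $K^o(k)$, $K^{*}_1(k)$, or $K^{*}_2(k)$; it only feeds into the affine co-state components $T(k)$ and $\bar T(k)$, which were identically zero in Theorem~\ref{theo1} but are now nonzero. First I would write the KKT conditions \eqref{NE-KKT} for the modified Hamiltonian \eqref{NE-Hamiltonian-large}. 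The stationarity condition $H^{ne}_{\delta u}(k)=0$ now reads, in analogy with \eqref{H_u},
\begin{equation}
  \begin{aligned}
    &H_{ux}(k)\,\delta x(k) + H_{uu}(k)\,\delta u(k) + H_{uw}(k)\,\delta w(k) \\
    &+ f^T_{u}(k)\,\delta\lambda(k+1) + {C^a_{u}}^T(k)\,\delta\mu(k) + {H}_u(k) = 0,
  \end{aligned}
\end{equation}
while the co-state recursions \eqref{H_x} and \eqref{H_w} are unchanged because the added term does not depend on $\delta x$ or $\delta w$; likewise the terminal conditions $\delta\lambda(N)=\psi_{xx}(N)\delta x(N)$, $\delta\bar\lambda(N)=\psi_{ww}(N)\delta w(N)$ still hold since the terminal cost in \eqref{NE-cost-large} is untouched.

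Next I would insert the same affine ansatz \eqref{lambda_form}–\eqref{lambda_form_bar} for $\delta\lambda(k)$ and $\delta\bar\lambda(k)$, with terminal values $T(N)=0$, $\bar T(N)=0$. Substituting the linearized dynamics \eqref{NE} and \eqref{lambda_form} into the modified stationarity condition gives, in place of \eqref{Z_uu},
\begin{equation}
  \begin{aligned}
    &Z_{ux}(k)\,\delta x(k) + Z_{uu}(k)\,\delta u(k) + Z_{uw}(k)\,\delta w(k) \\
    &+ {C^a_{u}}^T(k)\,\delta\mu(k) + f^T_{u}(k)\,T(k+1) + {H}_u(k) = 0,
  \end{aligned}
\end{equation}
with $Z_{ux}$, $Z_{uu}$, $Z_{uw}$ exactly as in \eqref{Zuu}. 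Adjoining the linearized active-constraint equation from \eqref{NE} and solving the resulting $2\times2$ block system for $[\delta u(k);\,\delta\mu(k)]$ — which is invertible under $Z_{uu}(k)>0$ and Assumption~\ref{ass2}, so $K^o(k)$ in \eqref{K0} is well defined — yields a relation of the form \eqref{u&mu} but with the last (affine) term replaced by $-K^o(k)\,[\,f^T_{u}(k)T(k+1)+{H}_u(k);\,0\,]$. Reading off the top block gives precisely the modified law \eqref{law2} with $K^{*}_3(k)=-[\,I\ \ 0\,]K^o(k)$, provided $K^{*}_1$, $K^{*}_2$ retain their Theorem~\ref{theo1} expressions. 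Verifying that last point is the bookkeeping heart of the argument: substituting the new $[\delta u;\delta\mu]$ relation into \eqref{H_x} and \eqref{H_w} and matching coefficients of $\delta x(k)$ and $\delta w(k)$ reproduces the identical recursions \eqref{S_k}, \eqref{S_k_bar} (hence the same $K^{*}_1$, $K^{*}_2$), while matching the constant terms produces the modified backward recursion \eqref{T_k-large} for $T(k)$ (and its analogue for $\bar T(k)$), which now carries the accumulated $H_u$ contributions and therefore need no longer vanish.

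The main obstacle I anticipate is purely organizational rather than conceptual: one must carefully track how the added ${H}_u(k)$ term propagates through the coefficient-matching step and confirm that it lands \emph{only} in the affine channel, leaving the quadratic channel — and thus the convexity requirement $Z_{uu}(k)>0$ and the well-posedness of $K^o(k)$ — exactly as in Theorem~\ref{theo1}. A secondary point requiring care is the well-posedness of the new $T$, $\bar T$ recursion: because of the terminal conditions $T(N)=\bar T(N)=0$ the coupled backward recursion \eqref{T_k-large} is uniquely solvable, so the extra term in \eqref{law2} is fully determined by the (possibly non-optimal) nominal data through $\{{H}_u(k)\}_{k=0}^{N-1}$, and it reduces to zero whenever the nominal solution is optimal, recovering Theorem~\ref{theo1} as a special case.
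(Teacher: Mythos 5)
Your proposal is correct and follows essentially the same route as the paper's proof: modify the stationarity condition with the extra $H_u(k)$ term, keep the affine co-state ansatz \eqref{lambda_form}--\eqref{lambda_form_bar} with $T(N)=\bar T(N)=0$, solve the block system for $[\delta u;\,\delta\mu]$ to obtain \eqref{u&mu-large} and hence \eqref{law2}, and confirm by coefficient matching in \eqref{lambda1} and \eqref{lambda_bar1} that the quadratic recursions (and thus $K^{*}_1$, $K^{*}_2$, $K^o$) are untouched while the $H_u$ contributions land only in the modified affine recursions \eqref{T_k-large} and \eqref{T_k_bar-large}. Your added observations (well-posedness of the backward $T$, $\bar T$ recursion and recovery of Theorem~\ref{theo1} when $H_u\equiv 0$) are consistent with, and slightly more explicit than, the paper's argument.
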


\begin{proof}
Using \eqref{NE-KKT} and \eqref{NE-Hamiltonian-large}, (23) is modified as 
\begin{equation}
  \begin{aligned}
    \label{H_u-large}
    &H_{ux}(k) \delta x(k) + H_{uu}(k) \delta u(k) + H_{u w}(k) \delta w(k) \\
    &+ f^T_{u}(k) \delta \lambda(k+1) + {C^a_{u}}^T(k) \delta \mu(k) + H_{u}(k) = 0.
      \end{aligned}
\end{equation}
Substituting the linearized model \eqref{NE} and \eqref{lambda_form} into \eqref{H_u-large} yields
\begin{equation}
  \begin{aligned}
    \label{Z_uu-large}
    &Z_{ux}(k) \delta x(k) + Z_{uu}(k) \delta u(k) + Z_{u w}(k) \delta w(k) \\
    &+ {C^a_{u}}^T(k) \delta \mu(k) + f^T_{u}(k) T(k+1) + H_{u}(k) = 0.
      \end{aligned}
\end{equation}
Using the linearized safety constraints \eqref{NE} and \eqref{Z_uu-large}, one can obtain
\begin{equation}
  \begin{aligned}
    \label{u&mu-large}
    &\begin{bmatrix}
    \delta u(k)\\
    \delta \mu(k)
    \end{bmatrix}
    = -K^o(k)
    \begin{bmatrix}
    Z_{ux}(k) \\
    C^a_{x}(k)
    \end{bmatrix}
    \delta x(k) \\
    & \hspace{17 mm} -K^o(k)
    \begin{bmatrix}
    Z_{u w}(k) \\
    C^a_{w}(k)
    \end{bmatrix}
    \delta w(k) \\
    & \hspace{17 mm} -K^o(k)
    \begin{bmatrix}
    f^T_{u}(k) T(k+1) \\
    0
    \end{bmatrix}\\
    & \hspace{17 mm} -K^o(k)
    \begin{bmatrix}
    H_{u}(k) \\
    0
    \end{bmatrix}.
  \end{aligned}
\end{equation}
Substituting \eqref{u&mu-large} into \eqref{lambda1} yields
\begin{equation}
\small
  \begin{aligned}
    \label{lambda2-large}
    &\delta \lambda(k) = \left( Z_{xx}(k) -
    \begin{bmatrix}
    Z_{xu}(k) \hspace{1 mm} {C^{a}_{x}}^T(k)
    \end{bmatrix}
    K^o(k)
    \begin{bmatrix}
    Z_{ux}(k)\\
    C^{a}_{x}(k)
    \end{bmatrix} \right) \delta x(k) \\
    &+ \left( Z_{x w}(k) -
    \begin{bmatrix}
    Z_{xu}(k) \hspace{1 mm} {C^{a}_{x}}^T(k)
    \end{bmatrix}
    K^o(k)
    \begin{bmatrix}
    Z_{u w}(k)\\
    C^a_{w}(k)
    \end{bmatrix} \right) \delta w(k) \\
    &+ f^T_{x}(k) T(k+1) -
    \begin{bmatrix}
    Z_{xu}(k) \hspace{1 mm} {C^{a}_{x}}^T(k)
    \end{bmatrix}
    K^o(k)
    \begin{bmatrix}
    f^T_{u}(k) T(k+1)\\
    0
    \end{bmatrix}\\
    &+ g^T_{x}(k) \bar{T}(k+1) -
    \begin{bmatrix}
    Z_{xu}(k) \hspace{1 mm} {C^{a}_{x}}^T(k)
    \end{bmatrix}
    K^o(k)
    \begin{bmatrix}
    H_{u}(k)\\
    0
    \end{bmatrix}.
      \end{aligned}
\end{equation}
From \eqref{S_k}, \eqref{lambda_form}, and \eqref{lambda2-large}, it follows that
\begin{equation}
  \begin{aligned}
    \label{T_k-large}
    &T(k) = g^T_{x}(k) \bar{T}(k+1) + f^T_{x}(k) T(k+1) \\
    & - \begin{bmatrix}
    Z_{xu}(k) \hspace{1 mm} {C^{a}_{x}}^T(k)
    \end{bmatrix}
    K^o(k)
    \begin{bmatrix}
    f^T_{u}(k) T(k+1) + H_{u}(k)\\
    0
    \end{bmatrix}.
      \end{aligned}
\end{equation}
Now, plugging \eqref{u&mu-large} into \eqref{lambda_bar1} yields
\begin{equation}
\small
  \begin{aligned}
    \label{lambda_bar2-large}
    &\delta\bar{\lambda}(k) = \left( Z_{wx}(k) -
    \begin{bmatrix}
    Z_{wu}(k) \hspace{1 mm} {C^{a}_{w}}^T(k)
    \end{bmatrix}
    K^o(k)
    \begin{bmatrix}
    Z_{ux}(k)\\
    C^{a}_{x}(k)
    \end{bmatrix} \right) \delta x(k) \\
    &+ \left( Z_{w w}(k) -
    \begin{bmatrix}
    Z_{wu}(k) \hspace{1 mm} {C^{a}_{w}}^T(k)
    \end{bmatrix}
    K^o(k)
    \begin{bmatrix}
    Z_{u w}(k)\\
    C^a_{w}(k)
    \end{bmatrix} \right) \delta w(k) \\
    &+ f^T_{w}(k) T(k+1) -
    \begin{bmatrix}
    Z_{wu}(k) \hspace{1 mm} {C^{a}_{w}}^T(k)
    \end{bmatrix}
    K^o(k)
    \begin{bmatrix}
    f^T_{u}(k) T(k+1)\\
    0
    \end{bmatrix}\\
    &+ g^T_{w}(k) \bar{T}(k+1) -
    \begin{bmatrix}
    Z_{wu}(k) \hspace{1 mm} {C^{a}_{w}}^T(k)
    \end{bmatrix}
    K^o(k)
    \begin{bmatrix}
    H_{u}(k)\\
    0
    \end{bmatrix}.
      \end{aligned}
\end{equation}
Using \eqref{S_k_bar}, \eqref{lambda_form_bar}, and \eqref{lambda_bar2-large}, one has
\begin{equation}
  \begin{aligned}
    \label{T_k_bar-large}
    &\bar{T}(k) = g^T_{w}(k) \bar{T}(k+1) + f^T_{w}(k) T(k+1) \\
    & - \begin{bmatrix}
    Z_{wu}(k) \hspace{1 mm} {C^{a}_{w}}^T(k)
    \end{bmatrix}
    K^o(k)
    \begin{bmatrix}
    f^T_{u}(k) T(k+1) + H_{u}(k)\\
    0
    \end{bmatrix}.
      \end{aligned}
\end{equation}
Based on \eqref{u&mu-large}, \eqref{T_k-large} and \eqref{T_k_bar-large}, the modified ENE policy \eqref{law2} is obtained. This completes the proof.
\end{proof}

Now, to deal with large perturbations that may change the sets of inactive and active constraints, the perturbed values of $C(x(k),u(k),w(k))$ and $\mu (k)$ are analyzed to determine the inactive and active constraints under the perturbations. Using \eqref{u&mu-large}, the relation between the state and preview perturbations and the Lagrange multiplier perturbation is expressed as
\begin{equation}
  \begin{aligned}
    \label{mu perturbation}
    &\delta \mu(k) = K^{*}_4(k) \delta x(k) + K^{*}_5(k) \delta w(k),\\
    &K^{*}_4(k) = -
    \begin{bmatrix}
    0 & I
    \end{bmatrix}
    K^o(k)
    \begin{bmatrix}
    Z_{ux}(k)\\
    {C^a_{x}}(k)
    \end{bmatrix},\\
    &K^{*}_5(k) = -
    \begin{bmatrix}
    0 & I
    \end{bmatrix}
    K^o(k)
    \begin{bmatrix}
    Z_{u w}(k)\\
    {C^a_{w}}(k)
    \end{bmatrix}.
  \end{aligned}
\end{equation}

Moreover, using \eqref{law2}, the constraint perturbation is represented as
\begin{equation}
  \begin{aligned}
    \label{constraint perturbation}
& \delta C(k) = C_x(k) \delta x(k) + C_u(k) \delta u(k) + C_w(k) \delta w(k)\\
&\hspace{9.25 mm} = (C_x(k)+C_u(k)K^{*}_1(k)) \delta x(k)\\
&\hspace{13 mm}+ (C_w(k)+C_u(k)K^{*}_2(k)) \delta w(k)\\
&\hspace{13 mm}+ C_u(k)K^{*}_3(k) (f^T_{u}(k) T(k+1) + {H}_u(k)).
  \end{aligned}
\end{equation}

The perturbed Lagrange multiplier and the perturbed constraint are given by
\begin{equation}
  \begin{aligned}
    \label{perturbed mu}
& \mu(k) = \mu^o(k) + \delta \mu(k),
  \end{aligned}
\end{equation}
\begin{equation}
  \begin{aligned}
    \label{perturbed constraint}
& C(k) = C^o(k) + \delta C(k).
  \end{aligned}
\end{equation}

Different activity statuses of the constraints may occur due to large perturbations. To address this issue, we consider a line that connects the nominal variables $x^o(0)$ and $w^o(0)$ to the perturbed variables $x(0)$ and $w(0)$. For the connecting line, we identify several intermediate points such that the status of the constraint remains the same between two consecutive points. Since $\mu(k) = 0$ and $C(k) = 0$ for the inactive and active constraints, respectively, we use \eqref{perturbed mu} for the active constraints to find the intermediate points which make the constraints inactive. Specifically, for the active constraints, an $\alpha(k)$ ($0\leq\alpha (k)\leq1$) is computed to have $\mu^o(k) + \alpha (k) \delta \mu(k) = 0$. Moreover, we employ \eqref{perturbed constraint} for the inactive constraints to find the intermediate points which make the constraints active. For the inactive constraints, the $\alpha (k)$ is computed to have $C^o(k) + \alpha (k) \delta C(k) = 0$. Thus, for $k\in [0, N-1]$, the intermediate points are achieved using the following equation:
\begin{equation}
  \begin{aligned}
    \label{alpha}
    & \alpha(k) =\begin{cases} 
    -\frac{\mu^o(k)}{\delta \mu(k)} & \text{if} \hspace{2 mm} k \in \mathbb{K}^a \vspace{2 mm},\\
    -\frac{C^o(k)}{\delta C(k)} &  \text{if} \hspace{2 mm} k \in \mathbb{K}^i.
    \end{cases}
  \end{aligned}
\end{equation}

The smallest $\alpha(k)$ is found such that the obtained perturbation changes the activity statuses of the constraints at least at one time step $k$.

\begin{algorithm}[ht]
    \caption{Modified Extended Neighboring Extremal}
    \label{Large ENE}
    \textbf{Input}: The functions $f$, $g$, $C$, $\phi$, and $\psi$, and the nominal trajectory $\mathbf{x}^{o}(0:N)$, $\mathbf{u}^{o}(0:N)$, and $\mathbf{w}^{o}(0:N)$.\\
    \textbf{1}: Set $j=0$.\\
    \textbf{2}: Initialize the matrices $\lambda^o (N)$, $\bar{\lambda}^o(N)$, $S(N)$, $W(N)$, $\bar{S}(N)$, and $\bar{W}(N)$ using \eqref{Optimal KKT}, \eqref{S_N} , and \eqref{S_N_bar}, respectively.\\
    \textbf{3}: Calculate, in a backward run, the Lagrange multipliers $\mu^o(k)$, $\lambda^o(k)$, and $\bar{\lambda}^o(k)$ using \eqref{Lagrange multipliers}.\\ 
    \textbf{4}: Calculate, in a backward run, the matrices $Z(k)$, the gains $K^*_1(k)$, $K^*_2(k)$, $K^*_3(k)$, $K^*_4(k)$, and $K^*_5(k)$, and the matrices $S(k)$, $W(k)$, $T(k)$, $\bar{S}(k)$, $\bar{W}(k)$, and $\bar{T}(k)$ using \eqref{Zuu}, \eqref{law2}, \eqref{mu perturbation}, \eqref{S_k}, \eqref{S_k_bar}, \eqref{T_k-large}, and \eqref{T_k_bar-large}, respectively.\\
    \textbf{5}: Given initial state variation $\delta x(0)$ and initial preview variation $\delta w(0)$, in a forward run, calculate $\delta \mu(k)$, $\delta C(k)$, $\alpha (k)$, $\delta x(k+1)$, and $\delta w(k+1)$ using \eqref{mu perturbation}, \eqref{constraint perturbation}, \eqref{alpha}, \eqref{law2}, and the variations of the system \eqref{NE}, respectively.\\
    \textbf{6}: Set, in a forward run, $\alpha(k) = 1$ if $\alpha(k) < 0$ or $\alpha(k) > 1$. Then, find $\lambda =\text{min}(\alpha(k))$. If $\lambda=0$, change the activity status of the constraint for the corresponding time step $k$ and go to Step 2. \\
    \textbf{7}: Given $x^o(0)$, $w^o(0)$, $\lambda \delta x(0)$, and $\lambda \delta w(0)$, in a forward run, calculate $\delta u(k)$, $u(k)$, $\delta x(k+1)$, $\delta w(k+1)$, $x(k+1)$, and $w(k+1)$ using \eqref{law2} and the variations of the system \eqref{NE}.\\
    \textbf{8}: If $0<\lambda<1$, set $x^o(0) = x^o(0) + \alpha \delta x(0)$, $w^o(0) = w^o(0) + \alpha \delta w(0)$, $\delta x(0)=(1-\alpha)\delta x(0)$, $\delta w(0)=(1-\alpha)\delta w(0)$, and $j=j+1$. Then, go to Step 2.\\
    \textbf{9}: If $\lambda=1$, in a forward run, calculate $u^*(k) = u^o(k)+\sum^{}_{j} \delta u_j(k)$, $x(k+1)$, and $w(k+1)$ using \eqref{law2} and \eqref{system}. 
\end{algorithm}

Algorithm 2 summarizes the modified ENE procedure for adaptation the pre-computed nominal non-optimal control solution to the large state and preview perturbations such that it achieves the optimal control as $u^{*}(k) = u^{o}(k) + \delta u(k)$ using Theorem \ref{theo2}. The algorithm identifies the intermediate points and determines the modified ENE adaptation policy.

\begin{remark} [Designing Parameters]
\label{Designing Parameters}
Considering suitable nominal models, the main design parameters of the proposed approach come from the original optimization problem \eqref{NMPC}, which are the prediction number $N$ and the designing weights in the stage cost $\phi(x,u,w)$ and the terminal cost $\psi(x,w)$. The prediction number $N$ must be high enough so that the obtained optimal controller stabilizes the system; however, higher $N$ causes higher computational cost to solve the optimization problem. Moreover, the designing weights in the costs must be selected such that both minimum tracking error and minimum control input are achieved. 
\end{remark}

\begin{remark} [Implementation]
\label{Implementation}
The proposed ENE framework is easy to implement and light in computation. Specifically, given a nominal initial state ${x}^o(0)$, a nominal preview ${w}^o(0:N)$, a control objective function to minimize, system and control constraints, a nominal optimal state and control trajectory ${x}^o,{u}^o$ will be computed using an optimal control strategy. Note that this nominal solution can be computed offline and stored online, can be performed on a remote powerful controller (e..g, cloud), or computed ahead of time by utilizing the idling time of the processor. In the same time, the ENE adaptation gains $K^{*}_1(k),\,K^{*}_2(k),k=0,1,\cdots,N-1$ in \eqref{law} can also be computed along with the nominal control law. During the online implementation, the actual initial state $x(0)$ and the actual preview $w$ are likely different from the nominal values used for the optimal control computations. Instead of recomputing the optimal control sequence, the control correction \eqref{law} is computed, where $\delta x(k) = x(k) - x^{o}(k)$ and $\delta w(k) = w(k) - w^{o}(k)$ denote the state perturbation and the preview perturbation, respectively. Then the final control is used as $u^{*}(k) = u^{o}(k) + \delta u(k)$. This implementation is easily extended for the modified ENE. As seen from the steps discussed above, the proposed approach is easy to implement and involves negligible online computational cost.
\end{remark}

\begin{remark} [Nonlinear Model Predictive Control]
\label{Nonlinear Model Predictive Control}
One can employ the nonlinear optimal control problem \eqref{NMPC} as the open-loop nonlinear model predictive control (NMPC) or the closed-loop NMPC. For the open-loop version, providing the $N$-length nominal trajectory from the NMPC, the ENE algorithm approximates the NMPC policy such that it calculates two time-varying $N$-length feedback gains on the state and preview perturbations. Although the feedback gains are pre-computed, the ENE is able to take feedback from the real system for the $N$ predictions in contrast to the open-loop NMPC. On the other hand, for the colsed-loop NMPC, we save the ENE solution but we only apply the first control input to the plant at each time step. Taking the feedback from the real system, the ENE solution from the previous step is considered as the nominal non-optimal solution, and the ENE algorithm is applied again to adapt the recent solution for the current time step.
\end{remark}

\begin{remark} [Comparison]
\label{Comparison}
In comparison with the existing NE frameworks \cite{bagherzadeh2023neighboring,ghaemi2019optimal,gupta2017combined,bloch2016neighboring}, we extend the regular NE approaches that only consider state deviations to a general setting that both state and preview deviations are considered. This is a significant extension as many modern control applications are employing preview information due to the increased availability of connectivity \cite{Hajida1,amini2019cabin,laks2011model,yazdandoost2022optimization}. The necessity of adapting to preview perturbations is also demonstrated in our simulation studies, where we show that the proposed ENE can significantly outperform the regular NE when the preview information has certain variations. 
\end{remark}

\section{Simulation Results}
\label{Sec5}
In this part, we demonstrate the performance of the proposed ENE framework for both small and large perturbations via a simulation example. 
The simulation example is adopted from the cart-inverted pendulum (see Fig. \ref{Cart-inverted pendulum fig}) whose system dynamics is described by: 
\begin{equation}
  \begin{aligned}
    \label{Cart-inverted pendulum}
    &\ddot{z}=\frac{F-{K}_{d}\dot{z}-m(L{\dot{\theta }}^{2} \sin (\theta )-g\sin (\theta ) \cos (\theta ))-2w_z}{M+m \hspace{0.5 mm} {\sin }^{2}(\theta )},\\ 
    &\ddot{\theta }=\frac{\ddot{z}\cos (\theta )+g\sin (\theta )}{L}-\frac{w_{\theta}}{mL^2},
  \end{aligned}
\end{equation}
where $z$ and $\theta$ denote the position of the cart and the pendulum angle. $m=1$kg, $M=5$kg, and $L=2$m represent the mass of the pendulum, the mass of the cart, and the length of the pendulum, respectively. $g=9.81$m/s$^{2}$ and ${K}_{d}=10$Ns/m are respectively the gravity acceleration and the damping parameter. The variable force $F$ controls the system under a friction force $w_z$ and a friction torque $w_{\theta}$. $T=0.1$s is considered as the sampling time for discretization of the model \eqref{Cart-inverted pendulum}, and we assume that we have certain preview of $w_z$ and $w_{\theta}$.
The states, the outputs, the preview information, and the control input constraint are respectively expressed as
\begin{equation*}
  \begin{aligned}
    \label{pendulum's states}
    	x={[{x}_{1},{x}_{2},{x}_{3},{x}_{4}]}^{T}={[z,\dot{z},\theta,\dot{\theta }]}^{T},
  \end{aligned}
\end{equation*}
\begin{equation*}
  \begin{aligned}
    \label{pendulum's output}
    	y={[{x}_{1},{x}_{3}]}^{T}={[z,\theta]}^{T},
  \end{aligned}
\end{equation*}
\begin{equation*}
  \begin{aligned}
    \label{pendulum's preview}
    	w={[{w}_{1},{w}_{2},{w}_{3},{w}_{4}]}^{T}={[0,w_z,0,w_{\theta}]}^{T},
  \end{aligned}
\end{equation*}
\begin{equation*}
  \begin{aligned}
    \label{pendulum's constraints}
    	-300\le F\le 300.
  \end{aligned}
\end{equation*}

\begin{figure}[!ht]     
 \centering
     \includegraphics[width=0.65\linewidth]{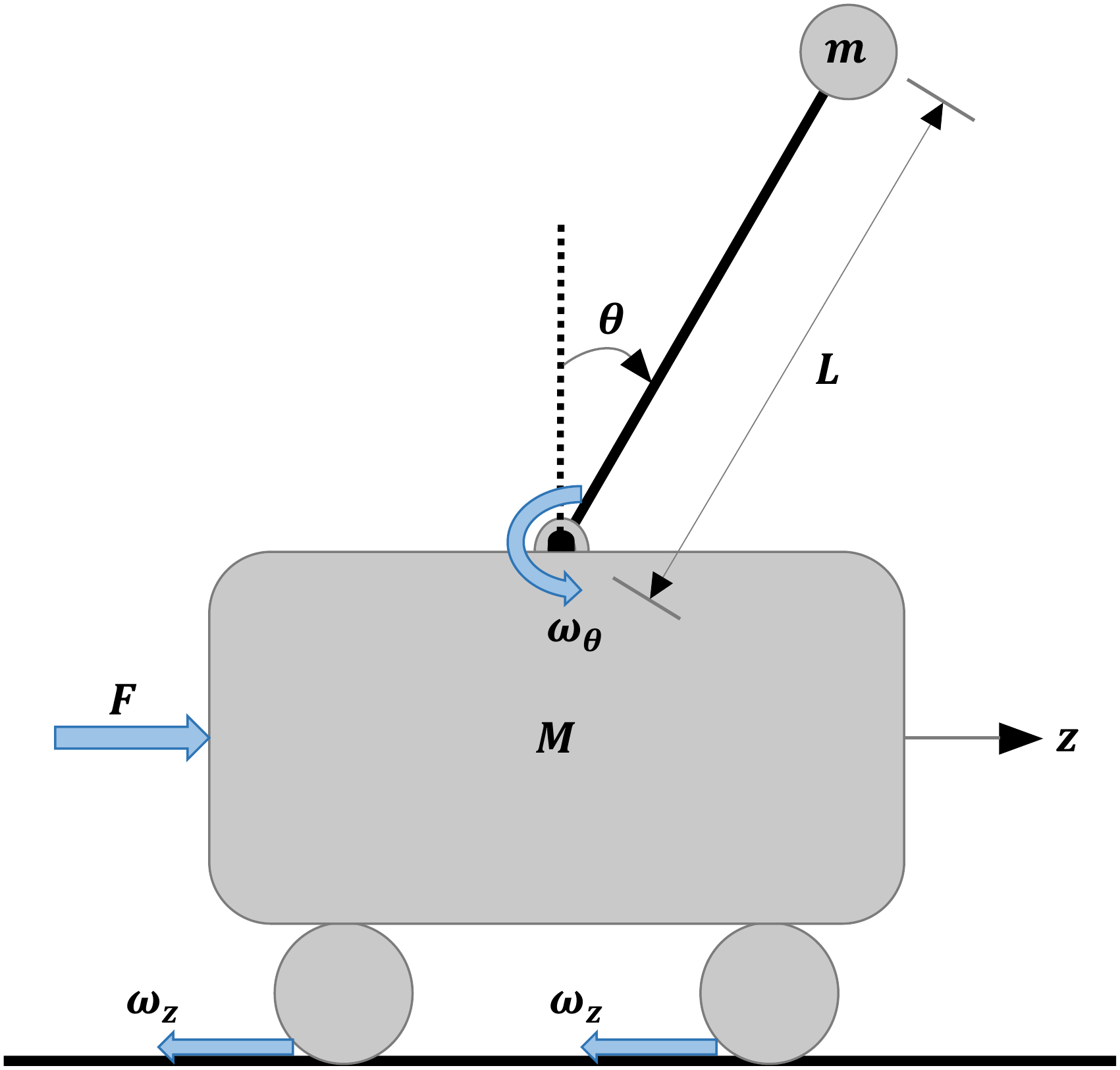}
     \caption{Cart-inverted pendulum.}
     \label{Cart-inverted pendulum fig}
\end{figure}

The following values are used for the simulation: $N = 35$, $x^o(0) = [0,0,-\pi,0]^T$, $w^o(0) = [0,0.1,0,0.1]^T$. Moreover, the nominal preview model is represented as $w^o(k+1) = -0.008 x^o(k) -0.1 w^o(k)$. For the small perturbation setting, the initial state perturbation and the actual friction profile are set as $\delta x (0) = [0.01,0.01,0.01,0.01]^T$ and $w(k)=0.004 \sin(k)+0.004 \text{rand}(k)+0.002$, respectively. For the large perturbation setting, the initial state perturbation and the actual friction profile are chosen as $\delta x (0) = [0.2,0.2,0.2,0.2]^T$ and $w(k)=0.015\sin(k)+0.015\text{rand}(k)+0.01$, respectively. 

Figs. \ref{Input small}-\ref{Cost small} show the control performance of the open-loop NMPC, the standard NE, the ENE, and the closed-loop NMPC subject to the small perturbations. For the open-loop NMPC, under the nominal initial state $x^o(0)$ and preview $w^o(0)$, we obtain the N-length open-loop trajectory $(x^{o},u^o,w^{o})$ and apply the open-loop control $u^o$ to the system as shown in Fig. \ref{Input small}. It is worth noting that the state and preview information are updated during the optimization problem based on the considered nominal model \eqref{Cart-inverted pendulum} and the nominal preview model $w^o(k+1) = -0.008 x^o(k) -0.1 w^o(k)$, respectively. However, since it is the open-loop version of the NMPC, the controller does not take the feedback from the real states and preview, makes the least control force in Fig. \ref{Input small}, and leads to degraded performance due to the state and preview deviations as shown in Fig. \ref{Outputs small}. The NE is capable of taking the state feedback from the real system and adjusting the nominal optimal control, the open-loop control trajectory obtained by the NMPC, for the state perturbations. From Fig. \ref{Outputs small}, one can see that the NE does show an improved performance as compared to the open-loop NMPC but it falls short against the ENE since it only handles the state perturbations without adapting to the preview perturbations. In comparison with the open-loop NMPC and the NE, the proposed ENE takes the state and preview feedback from the real system and achieves better performance, where it promptly stabilizes the system with the minimum cost in the presence of state and preview perturbations as shown in Fig. \ref{Cost small}. Although we employ the ENE for the open-loop NMPC, due to the feedback from the real system, the ENE shows a similar control performance as the closed-loop NMPC for this case as shown in Figs. \ref{Outputs small} and \ref{Cost small}. However, the closed-loop NMPC has high computational cost since it solves the optimization problem \eqref{NMPC} at each step.

\begin{figure}[!ht]
     \centering
     \includegraphics[width=0.99\linewidth]{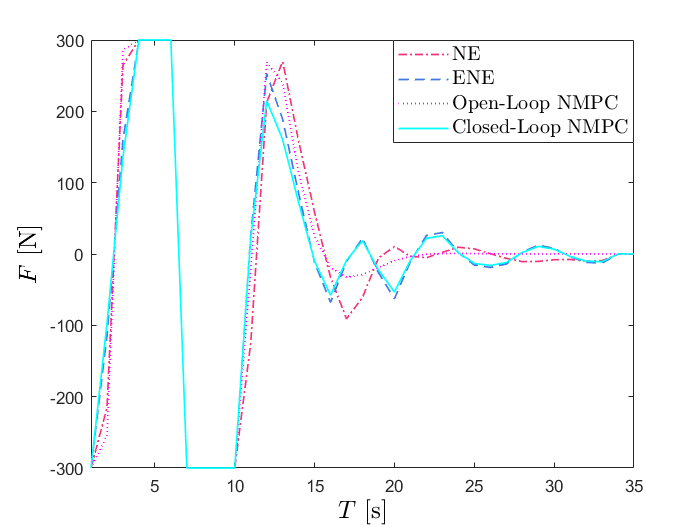}
    \caption{Control Input for Small Perturbation.}
     \label{Input small}
 \end{figure}
 
  \begin{figure}[!ht]
     \centering
     \includegraphics[width=0.99\linewidth]{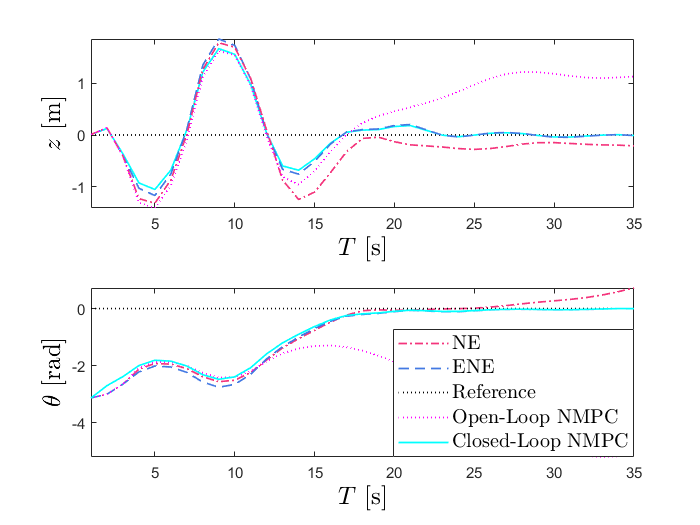}
    \caption{System Outputs for Small Perturbation.}
     \label{Outputs small}
 \end{figure}
 
  \begin{figure}[!ht]
     \centering
     \includegraphics[width=0.99\linewidth]{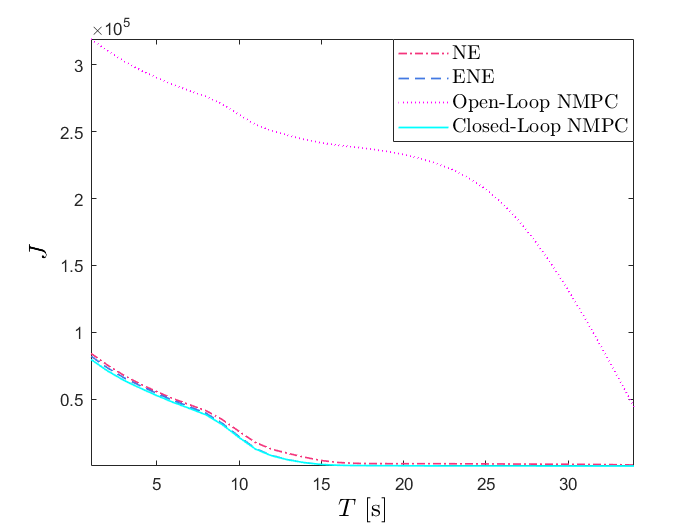}
    \caption{Cost for Small Perturbation.}
     \label{Cost small}
 \end{figure}

Figs. \ref{Input large} and \ref{Outputs large} illustrate the control performance of the open-loop NMPC, the NE, the ENE, the modified NE, the modified ENE, and the closed-loop NMPC subject to large perturbations. As shown in Fig. \ref{Input large}, one can see that the considered large perturbations change the activity status of the input constraint, and it causes that the NE and the ENE violates the constraint due to the absence of the intermediate points between the nominal initial state and preview and the perturbed ones. However, the modified NE and the modified ENE satisfies the constraint, and the modified ENE indicates a similar performance as the closed-loop NMPC as shown in Fig. \ref{Outputs large}. Moreover, to see the role of the nominal preview model on the proposed control scheme, Figs. \ref{Input comp} and \ref{Outputs comp} compare the results of the ENE and the modified ENE for two nominal preview models $w^o(k+1) = w^o(k)$ and $w^o(k+1) = -0.008 x^o(k) -0.1 w^o(k)$ with the actual friction profile $w(k)=0.008\sin(k)+0.008\text{rand}(k)+0.004$. One can see that the activity status of the constraint is changed under the considered perturbation; however, it is not high enough to cause the constraint violation for the ENE. Furthermore, it can be seen that both the modified ENE and the ENE accomplish better control performance when the preview model $w^o(k+1) = w^o(k)$ is applied. Providing a suitable nominal preview model leads to well control performance by the proposed ENE and modified ENE.

Table I compares the performances (i.e. $\lVert y - r \rVert$) and the computational times of the proposed controllers for the small perturbation. Based on the formulations, it is obvious that the ENE and the modified ENE (MENE) show the same performance and computational time for the small perturbations. We also have same result for the NE and the modified NE (MNE) for the small perturbations. Table II compares the performances and the computational times of the proposed controllers for the large perturbations. In Tables I and II, the closed-loop NMPC (CLNMPC) and the open-loop NMPC (OLNMPC) show the best and the worst performance, respectively; however, considering both performance and computational time, the modified ENE presents the best results.

The simulation setup is widely applicable as in many modern applications, a nominal preview model is available while the actual corresponding signal can also be measured or estimated online. For example, a wind energy forecast model is obtained using a deep federated learning approach \cite{ahmadi2022deep}, which can be served as a nominal preview model, and the wind disturbance can also be measured using light detection and ranging systems in real time \cite{laks2011model}. For the considered cart-inverted pendulum simulations, the nominal preview information is obtained using a nominal model, i.e. $w^o(k+1) = -0.008 x^o(k) -0.1 w^o(k)$; however, for each time step $k$, we generate the real preview information as $w(k) = 0.004 \sin(k) + 0.004 \text{rand}(k) + 0.002$, which leads to a perturbation from the nominal one. Providing a nominal solution based on the nominal state and preview, the proposed ENE framework adapts the nominal control to the perturbations generated by the measured/estimated real state and preview information. Furthermore, to simulate the large perturbation case, we follow the same process but change the real preview information as $w(k) = 0.015 \sin(k) + 0.015 \text{rand}(k) + 0.01$ for Figs. 6 and 7 and $w(k) = 0.008 \sin(k) + 0.008 \text{rand}(k) + 0.004$ for Figs. 8 and 9.
 
  \begin{figure}[!ht]
     \centering
     \includegraphics[width=0.99\linewidth]{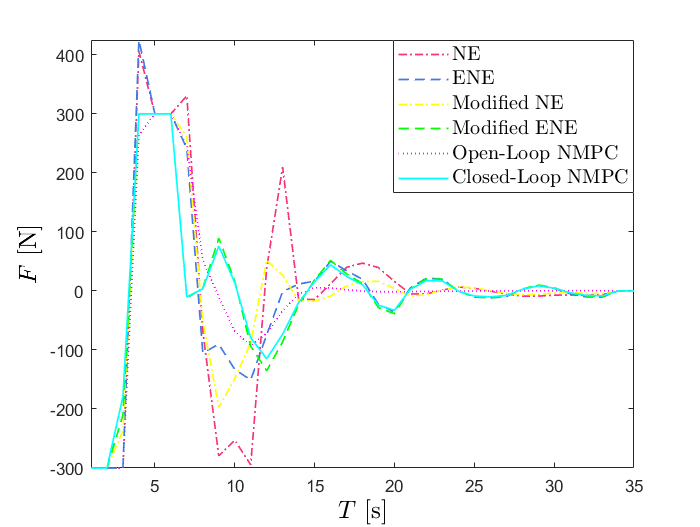}
    \caption{Control Input for Large Perturbation.}
     \label{Input large}
 \end{figure}
 
  \begin{figure}[!ht]
     \centering
     \includegraphics[width=0.99\linewidth]{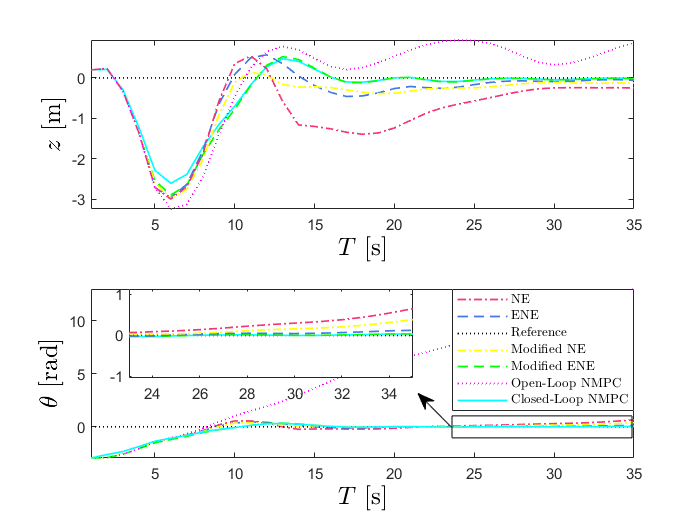}
    \caption{System Outputs for Large Perturbation.}
     \label{Outputs large}
 \end{figure}
 
   \begin{figure}[!ht]
     \centering
     \includegraphics[width=0.99\linewidth]{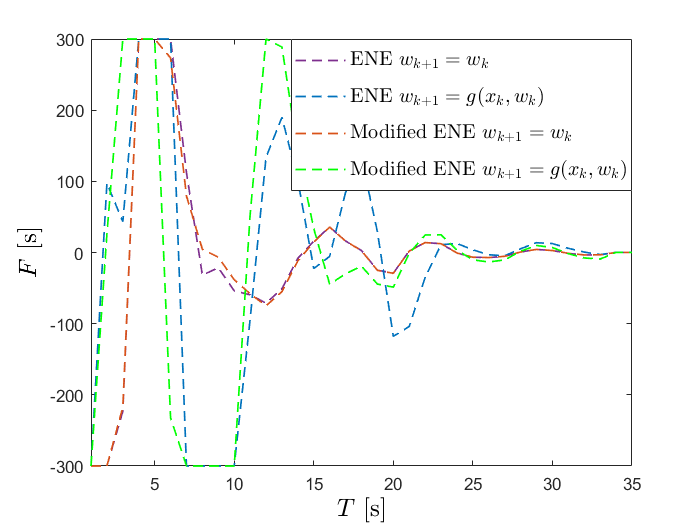}
    \caption{Control Input for Different Nominal Preview Models.}
     \label{Input comp}
 \end{figure}
 
   \begin{figure}[!ht]
     \centering
     \includegraphics[width=0.99\linewidth]{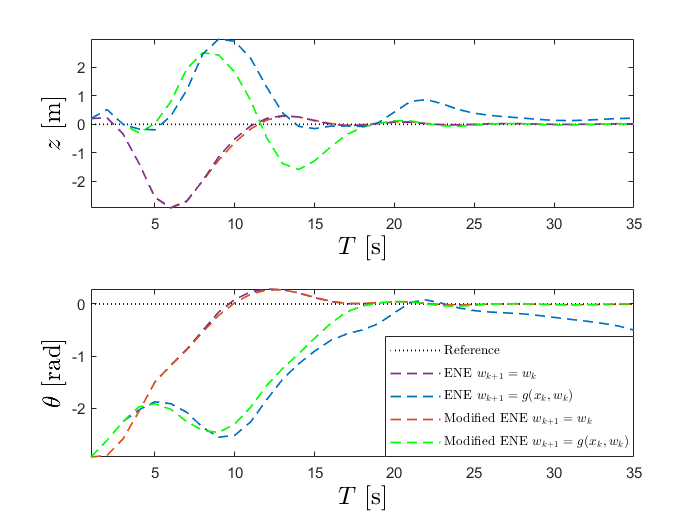}
    \caption{System Outputs for Different Nominal Preview Models.}
     \label{Outputs comp}
 \end{figure}  

\begin{table}[!ht]
\centering
 \caption{Comparison of Controllers for Small Perturbations}
\begin{tabular}{ |p{2.3cm}|p{2.3cm}|p{2.3cm}|  }
\hline\hline
Control & Performance & Time (per loop) \\
\hline
CLNMPC & $5.5735$ & $5.7179 \hspace{1 mm} ms$ \\\hline
MENE & $5.6429$ & $0.0659 \hspace{1 mm} ms$ \\\hline
ENE & $5.6429$ & $0.0659 \hspace{1 mm} ms$ \\\hline
MNE & $5.9846$ & $0.0658 \hspace{1 mm} ms$ \\\hline
NE & $5.9846$ & $0.0658 \hspace{1 mm} ms$ \\\hline
OLNMPC & $19.2123$ & $0.1770 \hspace{1 mm} ms$ \\\hline
\hline
\end{tabular}
\end{table}

\begin{table}[!ht]
\centering
 \caption{Comparison of Controllers for Large Perturbations}
\begin{tabular}{ |p{2.3cm}|p{2.3cm}|p{2.3cm}|  }
\hline\hline
Control & Performance & Time (per loop) \\
\hline
CLNMPC & $6.1609$ & $5.7179 \hspace{1 mm} ms$ \\\hline
MENE & $6.2704$ & $0.1225 \hspace{1 mm} ms$ \\\hline
ENE & $6.6838$ & $0.0659 \hspace{1 mm} ms$ \\\hline
MNE & $6.7045$ & $0.1224 \hspace{1 mm} ms$ \\\hline
NE & $7.4038$ & $0.0658 \hspace{1 mm} ms$ \\\hline
OLNMPC & $41.8378$ & $0.1770 \hspace{1 mm} ms$ \\\hline
\hline
\end{tabular}
\end{table}

\section{Conclusion}
\label{Sec6}
In this work, an ENE algorithm was developed to approximate the nonlinear optimal control policy for the modern applications which incorporate the preview information. The developed ENE was based on the second-order variation of the original optimization problem, which led to a set of Riccati-like backward recursive equations. The ENE adapted a nominal trajectory to the state and preview perturbations, and a multi-segment strategy was employed to guarantee well closed-loop performance and constraint satisfaction for the large perturbations. Simulations of the cart inverted pendulum system demonstrated the ENE's technological advances over the NE and the NMPC. Additionally, it was demonstrated that the nominal preview model is crucial to the effectiveness of the ENE. The proposed ENE framework is applicable to general optimal control problem setting as there is no assumption on the under/over-actuation of the system. If a regular optimal control implementation can yield good performance, the ENE is expected to yield comparable performance with less computation complexity. The computational load of the ENE grows linearly for the optimization horizon, which alleviates the online computational burden and extends the applicability of the optimal controllers. The main contribution of this paper is mainly on the proposed new framework with control law derivations and analysis. The main purpose of the simulation is to demonstrate the effectiveness of the proposed framework by showing that the ENE is able to achieve improved performance (as compared to the open-loop NMPC and the standard NE) with negligible online computation (as compared to the closed-loop NMPC). The considered cart-inverted pendulum is a classical system frequently used for the nonlinear control benchmarks \cite{desouky2022lyapunov}. In our future work, we will evaluate the developed ENE framework on real-world physical systems such as robots and autonomous vehicles. Furthermore, we will consider the reference perturbation for tracking control problems and also develop a data-enabled ENE to remove the requirement of having an explicit system model.


\bibliographystyle{ieeetr}
\bibliography{References.bib}
 
\end{document}